\newtheorem{theorem}{Theorem}[section]
\newtheorem{lemma}[theorem]{Lemma}
\newtheorem{proposition}[theorem]{Proposition}
\theoremstyle{definition}
\newtheorem{remark}[theorem]{Remark}
\newcommand{\vs}{\vspace{0.5 cm}}
\newcommand{\bdm}{\begin{displaymath}}
\newcommand{\edm}{\end{displaymath}}
\newcommand{\bdn}{\begin{eqnarray}}
\newcommand{\edn}{\end{eqnarray}}
\newcommand{\bay}{\begin{array}{c}}
\newcommand{\eay}{\end{array}}
\newcommand{\ben}{\begin{enumerate}}
\newcommand{\een}{\end{enumerate}}
\newcommand{\beq}{\begin{equation}}
\newcommand{\eeq}{\end{equation}}
\newcommand{\f}{\frac}
\newcommand{\F}{\mathcal{F}}
\newcommand{\form}{\F_{\alpha}}
\newcommand{\R}{\mathbb{R}}
\newcommand{\ci}{\mathbb{C}}
\newcommand{\xv}{\mathbf{x}}
\newcommand{\Xv}{\mathbf{X}}
\newcommand{\Pv}{\mathbf{P}}
\newcommand{\yv}{\mathbf{y}}
\newcommand{\rv}{\mathbf{r}}
\newcommand{\sv}{\mathbf{s}}
\newcommand{\tv}{\mathbf{t}}
\newcommand{\kv}{\mathbf{k}}
\newcommand{\pv}{\mathbf{p}}
\newcommand{\la}{\lambda}
\newcounter{remark}[section]
\newcommand{\be}{\begin{equation}}
\newcommand{\ee}{\end{equation}}
\newcommand{\ba}{\begin{eqnarray}}
\newcommand{\ea}{\end{eqnarray}}
\newcommand{\ve}{\varepsilon}
\newcommand{\bR}{{\mathbb R}}
\newcommand{\bC}{{\mathbb C}}
\newcommand{\n}{\noindent}
\newcommand{\donothing}[1]{}
\title[]{On the quantum mechanical three-body problem with zero-range interactions}
\author[]{Giulia Basti}
\address{Dipartimento di Matematica, Sapienza Universit\`a di Roma, P.le A. Moro 5, 00185, Roma, Italy}
\email{basti@mat.uniroma1.it}
\author[]{Alessandro Teta} 
\address{Dipartimento di Matematica, Sapienza Universit\`a di Roma, P.le A. Moro 5, 00185, Roma, Italy} 
\email{teta@mat.uniroma1.it}
\begin{document}


\begin{abstract}
In this note we discuss the quantum mechanical three-body problem with pairwise zero-range interactions in dimension three. 
We review the state of the art concerning the construction of the corresponding Hamiltonian as a self-adjoint operator in the bosonic and in the fermionic case.  Exploiting a quadratic form method, we also prove  self-adjointness and  boundedness from below  in the case of three identical bosons when the Hilbert space is suitably restricted, i.e., excluding  the ``s-wave" subspace.
\end{abstract}



\maketitle

\hfill {\em Dedicated to Pavel}

\section{Introduction}\label{intr}

The  quantum mechanical three-body problem with pairwise zero-range interactions is a subject of considerable interest both for physical applications and for its peculiar mathematical structure. 

\n
The model has been introduced around the middle of the last century to describe nuclear interactions at low energy. More recently, interesting applications have been developed also   
in the physics of cold atoms, particularly in connection with the study of the Efimov effect. This is essentially due to the  experimental  possibility to realize, via the so-called Feshbach resonance, situations  where the interaction is well described by a zero-range force, in particular in the unitary limit. Roughly speaking,  unitary limit means  that the two-body interaction is characterized by a zero-energy resonance or, equivalently, by an infinite value of the scattering length.    

\n
The correct definition of the Hamiltonian, the conditions for the occurrence of the Efimov effect and the analysis of the stability problem, i.e., the existence of a finite lower bound for the Hamiltonian, have been widely studied both in the physical \cite{bh,cmp,ct,cw,cft, km,wc1,wc2} and in the mathematical \cite{CDFMT, CDFMT2, DFT,fm,FT,ms,m1,m3,m4} literature. 

\n
Here we shall review the state of the art concªthe construction of the Hamiltonian as a self-adjoint operator. Exploiting a quadratic form method, we also prove  lower boundedness  of the Hamiltonian in the case of three identical bosons when the Hilbert space is suitably restricted, i.e., excluding  the ``s-wave'' subspace.

\n
The formal Hamiltonian describing three quantum particles  in $\bR^d$, $d=1,2,3$, interacting via a zero-range, two-body interaction can be written as
\beq\label{hamnd}
\mathcal H =- \sum_{i=1}^3 \f{1}{2m_i} \Delta_{\xv_i} + \sum_{\underset{i < j}{i,j=1}}^3 \nu_{ij} \,\delta(\xv_i - \xv_j),
\eeq
where $\xv_i \in \bR^d$, $i=1, 2,3$,  is the coordinate of the $i$-th particle,   $m_i$ is the corresponding  mass, $\Delta_{\xv_i}$ is the Laplacian relative to  $\xv_i $, and $\nu_{ij} \in \bR$ is the strength  of the interaction between particles $i$ and $j$. To simplify the notation we set $\hbar =1$. 

\n
In order to give a rigorous meaning to \eqref{hamnd} as a self-adjoint operator in $L^2(\R^{3d})$, 
the first step is to give a mathematical definition, i.e., to establish the conditions that such  Hamiltonian must satisfy.  
We first notice that, in any reasonable definition,  the interaction term of the Hamiltonian must be non trivial only  on the hyperplanes $\cup_{i<j} \{\xv_i = \xv_j\}$, where the coordinates of two particles coincide. As a starting point, it is therefore natural to consider the operator $\dot{\mathcal H}_0$ defined as the free Hamiltonian restricted to a domain of smooth functions vanishing in the neighbourhood of each hyperplane $\{\xv_i=\xv_j\}$. Such operator is symmetric but not self-adjoint and one (trivial) self-adjoint extension is obviously the free Hamiltonian.  Then we define a  Hamiltonian for a system of three quantum particles in $\bR^d$ with a two-body, zero-range interaction as a non trivial self-adjoint extension of $\dot{\mathcal H}_0$.  
As a consequence of the definition, any such Hamiltonian acts as the free Hamiltonian outside the hyperplanes $\cup_{i<j} \{\xv_i = \xv_j\}$ and it is characterized by a specific  boundary condition satisfied by the wave function at each  hyperplane  $\{\xv_i = \xv_j\}$.

 \n
The second and more important step  is the  explicit construction of the self-adjoint extensions.  
The two most frequently used techniques  are Krein's theory of self-adjoint extensions and  approximation by regularized Hamiltonians, in the sense of the limit of the resolvent or of the quadratic form.  
In dimension one the problem is relatively simple due to the fact that the interaction term is a small perturbation of the free Hamiltonian in the sense of quadratic forms. In dimension two  a natural class of Hamiltonians with local zero-range interactions was constructed in \cite{DFT} and  it was also shown that such Hamiltonians are all bounded from below. In dimension three  the analysis is more delicate and in the rest of the paper we shall discuss the problem in some detail.

\n
In order to explain the difficulty, we first consider the simpler two-body  case where, in the center of mass  reference frame,  one is reduced to study a one-body problem in the relative coordinate $\xv$  with a  fixed $\delta$-interaction placed at the origin. In this case  (see, e.g., \cite{al}) the entire class of self-adjoint extensions describing Hamiltonians with point interaction 
can be explicitly constructed. One can show that the domain $D(h_{\alpha})$ of each Hamiltonian $h_{\alpha}$  consists of functions  $ \psi \in L^2(\bR^3) \cap H^2(\bR^3 \setminus \{0\})$ such that  
\beq\label{bc0}
\psi(\xv) = \f{q}{|\xv |} + r  + o(1)\, ,	\hspace{1cm}	\text{with}\;\; \;r=\alpha\, q\, ,
\eeq
for $  |\xv| \rightarrow 0$, where $q \in \ci$ and $\alpha \in \bR$ is a parameter proportional to the inverse of the scattering length. The relation $r=\alpha q$ in \eqref{bc0}  should be understood as the generalized boundary condition satisfied at the origin by all the elements of the domain. Moreover, by definition $h_{\alpha}$ satisfies
\be\label{acth}
(h_{\alpha} \psi)(\xv)= - \f{1}{2\mu} (\Delta \psi)(\xv)\,, \;\;\;\;\;\; \text{for}\;\; \xv\neq 0
\ee
where $\mu$ denotes the reduced mass of the two-body problem. 

\n
 In the three-particle case the  characterization of all possible self-adjoint extensions of $\dot{\mathcal H}_0$ is more involved. In order to circumvent the difficulty, a natural strategy is to construct  a  class of extensions  based on the analogy with the two-body case. More precisely, one considers an extension of $\dot{\mathcal H}_0$, called Skornyakov-Ter-Martirosyan (STM) operator $H_{\alpha}$,    which, roughly speaking, is a symmetric operator acting on functions $\psi \in L^2(\bR^{9}) \cap H^2( \bR^{9} \setminus \cup_{i<j} \{ \xv_i=\xv_j\})$ 
satisfying the following condition for $|\xv_i - \xv_j| \rightarrow 0$:
\beq\label{bcn}
\psi(\xv_1, \xv_2 , \xv_3)= \f{Q_{ij} (\rv_{ij}, \xv_k) }{|\xv_i - \xv_j|} + R_{ij} (\rv_{ij}, \xv_k)+ o(1)\,, \;\;\;\;\;\text{with} \;\; R_{ij}= \alpha_{ij} Q_{ij}\, ,
\eeq
where 
\be\label{rij}
\rv_{ij}= \f{m_i \xv_i + m_j \xv_j}{m_i + m_j}
\ee
$k \neq i,j$, $Q_{ij}$ is a  suitable function defined on the hyperplane $\{\xv_i = \xv_j\}$ and $\{\alpha_{ij}\} $ is  a collection of real parameters labelling the extension.  Notice that in the above limiting procedure for $|\xv_i - \xv_j| \rightarrow 0$ we keep fixed the center of mass of the particles $i,j$ and the position of the remaining particle. 
Furthermore, one has
\be \label{stm}
(H_{\alpha} \psi)(\xv_1,\xv_2,\xv_3)= (H_{f} \psi)(\xv_1,\xv_2,\xv_3)\,, \;\;\;\;\;\; \text{for}\;\; \xv_i \neq \xv_j
\ee
where $H_f$ is the free Hamiltonian. 

\n
Noticeably, the boundary condition \eqref{bcn} defining the STM extension of $\dot{\mathcal H}_0$ is a  natural  generalization to the three-body case of the condition \eqref{bc0} that characterizes  the two-body case.   Unfortunately, unlike \eqref{bc0}, \eqref{bcn} does not necessarily define a self-adjoint operator. Indeed, for a system of three identical bosons it was shown in \cite{fm} that the STM operator is not self-adjoint and all  its self-adjoint extensions are unbounded from below owing to the presence of an infinite sequence of energy levels $E_k$ going to $-\infty$ for $k \rightarrow \infty$. In \cite{MM} this result was generalized  to the case of three distinguishable  particles with different masses. This kind of instability  is known in the literature as the Thomas effect.  
It should be stressed that the Thomas effect is strongly related to the well-known Efimov effect (see, e.g., \cite{bh}) even if, to our knowledge, a rigorous mathematical investigation of this  connection is still lacking. 

\n
Here we describe an approach to the stability problem based on the theory of quadratic forms. In particular, in section 2 we explicitly construct the quadratic form naturally associated to the STM operator in the general case of three particles with different masses. 

\n
In sections 3 and 4 we consider two particular cases where the Hilbert space of states is suitably restricted, e.g., introducing  symmetry constraints on the wave function. In such cases the quadratic form is shown to be closed and bounded from below, thus defining a self-adjoint and bounded from below Hamiltonian of the system. 

\n
In the first case we consider a system of three identical bosons and we show that instability occurs only in the ``s-wave" subspace. More precisely, we  restrict the Hilbert space  to the wave functions which are not invariant under rotation of the coordinates of each particle and we prove that the quadratic form is closed and bounded from below on such subspace.

\n
In the second case we discuss the antisymmetry constraint. In fact, a wave function that is antisymmetric under exchange of coordinates of two particles necessarily vanishes at the coincidence points of such two particles, thus making their mutual zero-range interaction ineffective. Therefore, it is reasonable to expect that in a system of two identical fermions plus a different particle the interaction term in the Hamiltonian is less singular, thus making the system stable. Indeed, it has been shown that this is in fact the case for suitable values of the mass ratio (see, e.g., \cite{CDFMT, CDFMT2, m3,m4}).

\section{The energy form}

\n
We start illustrating the construction of the quadratic form in the simple case of the one-body Hamiltonian $h_{\alpha}$, formally introduced in section 1. The idea is to represent the generic element of $D(h_{\alpha})$ in the form
\be\label{dec1}
\psi=w+qg
\ee
where $w$ is a smooth function, $q \in \bC$ and
\be
g(\xv)=\f{1}{|\xv|}
\ee
The singular part $qg$ in the decomposition  \eqref{dec1} can be thought as the electrostatic potential produced by the point charge $q$ placed at the origin. According to decomposition \eqref{dec1}, the boundary condition \eqref{bc0} can be rewritten as
\be\label{bc00}
w(0)= \alpha q
\ee
Taking into account \eqref{acth} and \eqref{dec1}, the expectation value of $h_{\alpha}$ can be represented as
\begin{align}
F_{\alpha}(\psi)= (\psi, h_{\alpha} \psi) = \lim_{\ve \rightarrow 0} \int_{|\xv >\ve} \!\!\!d \xv\, \overline{\psi}(\xv) \left( -\f{1}{2\mu} \Delta \psi \right)(\xv) \nonumber \\
= \f{1}{2\mu}  \lim_{\ve \rightarrow 0} \int_{|\xv >\ve} \!\!\!d \xv\, \overline{w} (\xv) (-\Delta w)(\xv) + \f{\overline{q}}{2\mu} \lim_{\ve \rightarrow 0} \int_{|\xv >\ve} \!\!\!d \xv\, g(\xv) (-\Delta w)(\xv)
\end{align}
Integrating by parts, taking the limit $\ve \rightarrow 0$ and using \eqref{bc00}, we arrive at the following quadratic form
\be\label{for0}
F_{\alpha}(\psi)= \f{1}{2\mu} \int \! d\xv \, |\nabla w (\xv)|^2 + \f{2 \pi }{\mu} \alpha |q|^2
\ee
which is defined on the natural domain
\be\label{dofo0}
D(F_{\alpha})= \left\{ \psi \in L^2(\R^3) \,|\, \psi = w + q g, \, |\nabla w|\in L^2(\R^3), \, q \in \bC \right\}
\ee
It is a simple exercise to show that the form \eqref{for0}, \eqref{dofo0} is closed and bounded from below. Therefore it defines a self-adjoint and bounded from below operator which obviously coincides with $h_{\alpha}$. One can also notice that, defining
\be
g^{\la}(\xv)= \f{e^{- \sqrt{\la}|\xv|}}{|\xv|}\,, \;\;\;\;\;\;\la >0
\ee
 the following equivalent representation of the form domain holds
\be
D(F_{\alpha})= \left\{ \psi \in L^2(\R^3) \,|\, \psi =w^{\la} +q g^{\la}, \, w^{\la} \in H^1(\R^3) , \, q\in \bC \right\}
\ee
where $H^s(\R^d)$ denotes the standard Sobolev space in $\R^d$ of order $s \in \R$.  Accordingly one has 
\be
F_{\alpha}(\psi)= \f{1}{2\mu} \! \int \!\! d\xv \, \Big( |\nabla w^{\la}(\xv)|^2      +\la |w^{\la}(\xv)|^2 - \la |\psi(\xv)|^2     \Big) + \f{2\pi}{\mu} \left(\alpha +\sqrt{\la}\right)|q|^2
\ee

\n
In the three-particle case we follow the same idea.  We first introduce the notation $\Xv =(\xv_1,\xv_2,\xv_3)$, $\Pv=(\pv_1,\pv_2,\pv_3)$ for positions and momenta of the particles, $M=m_1+m_2+m_3$ for the total mass,  $\mu_{ij}=\f{m_i m_j}{m_i+m_j}$ for the reduced masses and $\hat{f}$ for the Fourier transform of $f$. We set $x=|\xv|$ for $\xv \in \R^3.$ Then we introduce the "potential" produced by the "charges" $Q= \{Q_{ij}\}$ distributed on the hyperplanes $\{\xv_i=\xv_j\}$. With an abuse of notation, we set
\beq
	\left(GQ\right)(\Xv) =   \sum_{i \prec j}  \left(GQ_{ij} \right)(\Xv)=     \sum_{i\prec j}\frac{1}{(2\pi)^5\mu_{ij}}\int d\Pv\,e^{i\Xv\cdot\Pv}\,\frac{\hat{Q}_{ij}(\pv_i+\pv_j,\pv_{k})}{H_f(\Pv)}
\eeq
where $k \neq i,j$, $H_f(\Pv)$ denotes the free Hamiltonian in the momentum variables and with $\prec$ we refer to the order $1\prec 2,\,2\prec 3,\,3\prec 1.$ Following the line of proposition 6.3 in \cite{FT}, one  
 shows that $GQ$ solves in the distributional sense the equation
\beq\label{hfg}
	H_f(GQ)(\Xv)=2\pi\sum_{i\prec j}\frac{1}{\mu_{ij}}Q_{ij}(\rv_{ij},\xv_k)\,\delta(\xv_i-\xv_j)
\ee
where $\rv_{ij}$ is defined in \eqref{rij}. In particular this implies
\beq\label{hg}
	H_f(GQ)(\xv_1,\xv_2,\xv_3)=0 \qquad \text{if\;\;\;\;}  \xv_i\neq\xv_j.
\eeq
Moreover $GQ$ has the following behaviour when $|\xv_i-\xv_j|\rightarrow 0$
\beq\label{bcg}
	(GQ)(\Xv)=\frac{Q_{ij}(\rv_{ij},\xv_k)}{|\xv_i-\xv_j|}-(\Gamma Q)_{ij}(\rv_{ij},\xv_k)+o(1)
\eeq
where
\begin{align}\label{gamma}	
	(\Gamma Q)_{ij}(\rv_{ij},\xv_k)\!=\!\frac{1}{(2\pi)^3}\!\!\int\! d\sv\, d\tv\, e^{i(\rv_{ij}\cdot\sv+\xv_k\cdot\tv)}\!\sqrt{\!\frac{\mu_{	 ij}}{m_i\!+\!m_j}\! s^2\!+\!\frac{\mu_{ij}}{m_k}t^2}\,\hat{Q}_{ij}(\sv,\tv)\nonumber \\
	-\frac{1}{(2\pi)^5}\int d\Pv \frac{e^{i\rv_{ij}(\pv_i+\pv_j)+i\xv_k\cdot\pv_k}}{H_f(\Pv)}\left[\frac{\hat{Q}_{ik}(\pv_i+\pv_k,\pv_j)}{\mu	 _{ik}}+\frac{\hat{Q}_{jk}(\pv_j+\pv_k,\pv_i)}{\mu_{jk}}\right]
\end{align}

\n
Proceeding in analogy with the one-body case we decompose the generic element $\psi$ in $D(H_\alpha)$ as 
\beq\label{dec}
	\Psi=u+GQ
\eeq
where $u$ is a smooth function. Then the boundary condition \eqref{bcn}, using  \eqref{bcg}, can be rewritten as 
\beq\label{bcw}
	u(\Xv)\Big|_{\xv_i = \xv_j}=(\Gamma Q)_{ij}(\rv_{ij},\xv_k)+\alpha_{ij}Q_{ij}(\rv_{ij},\xv_k)\eeq
Using the decomposition \eqref{dec}, we obtain the explicit expression of the quadratic form $\mathcal E_{\alpha}$ associated to the operator $H_\alpha.$ We set $\mathcal{D}_{\varepsilon}=\{\mathbf{X}\in\R^9 \,|\, |\mathbf{x}_i-\mathbf{x}_j|>\varepsilon, \, \;\forall \,i,j\}$. Then taking into account \eqref{stm}, \eqref{hfg} and the boundary condition \eqref{bcw} we have
\begin{align}\label{form}
		\mathcal E_{\alpha}(\Psi)=&(\Psi,H_\alpha\Psi)=\lim_{\varepsilon\to 0} \int_{\mathcal{D}_{\varepsilon}}d \Xv\,\overline{\Psi(\Xv)}(H_f\Psi)(\Xv) \nonumber\\
							 =&(u,H_f\, u)+\lim_{\varepsilon\to 0}\int_{\mathcal{D}_{\varepsilon}}d\Xv\,\overline{GQ(\Xv)}\,(H_f u)(\Xv) \nonumber \\
							 =&(u,H_f\, u)\!+\!\sum_{i\prec j}\frac{2\pi}{\mu_{ij}}\!\left[\alpha_{ij}\|Q_{ij}\|^2\!+\!\!\int\! d\rv_{ij}\,d\xv_k
								\overline{Q_{ij}(\rv_{ij},\xv_k)}(\Gamma Q)_{ij}(\rv_{ij},\xv_k)\right] \nonumber \\
							 =&(u,H_f\, u)\!+\!\sum_{i\prec j}\!\frac{2\pi}{\mu_{ij}}\!\left[\alpha_{ij}\|Q_{ij}\|^2\!+\!\!\int\! d\sv\, d\tv|\hat{Q}_{ij}(\sv
								,\tv)|^2\!\sqrt{\!\frac{\mu_{ij}}{m_i\!+\! m_j}s^2\!+\!\frac{\mu_{ij}}{m_k}t^2}\right. \nonumber \\
								&\left.-\frac{1}{(2\pi)^2\mu_{jk}}2\Re\int d\Pv\frac{\overline{\hat{Q}_{ij}(\pv_i+\pv_j,\pv_k)}\hat{Q}_{jk}(\pv_j+\pv_k,\pv
								 _i)}{H_f(\Pv)}\right]
\end{align}
where in the last equality we have used the definition of $\Gamma Q$ given in \eqref{gamma}. 
For later use, it is convenient to rewrite in a different form the last two integrals in the above formula. Let us introduce the change of variables
\begin{align}\label{coo}
	\left\{
	\begin{aligned}
		\pv&=\pv_1+\pv_2+\pv_3\\
		\kv_1&=\frac{m_j+m_k}{M}\pv_i-\frac{m_i}{M}\pv_j-\frac{m_i}{M}\pv_k\\
		\kv_2&=\frac{m_i+m_j}{M}\pv_k-\frac{m_k}{M}\pv_i-\frac{m_k}{M}\pv_j
	\end{aligned}
	\right.
\end{align}
Then defining 
\be
\hat{\zeta}_{ij}(\mathbf{k},\mathbf{p})=\hat{Q}_{ij}\left(\frac{m_i+m_j}{M}\mathbf{p}-\mathbf{k},\frac{m_k}{M}\mathbf{p}+\mathbf{k}\right)
\ee
 we have
\beq
	\int\! d\Pv\frac{\overline{\hat{Q}_{ij}(\pv_i\!+\!\pv_j,\pv_k)}\hat{Q}_{jk}(\pv_j\!+\!\pv_k,\pv_i)}{H_f(\Pv)}\!=\!\!\int\! d\pv\, d\kv_1 	d\kv_2\frac{\overline{\hat{\zeta}_{ij}(\kv_2,\pv)}\hat{\zeta}_{jk}(\kv_1,\pv)}{\frac{k_1^2}{2\mu_{ij}}\!+\!\frac{k_2^2}{2\mu_{jk}}\!+\!\frac	 {\kv_1\cdot\kv_2}{m_j}\!+\!\frac{p^2}{2M}}
\eeq
Moreover, defining the variables $\pv =\tv+\sv,$ $\kv =\frac{m_i+m_j}{M}\tv-\frac{m_k}{M}\sv$, we also have
\begin{multline}\label{diag}
	\int d\sv\, d\tv|\hat{Q}_{ij}(\sv,\tv)|^2\sqrt{\frac{\mu_{ij}}{m_i+m_j}s^2+\frac{\mu_{ij}}{m_k}t^2}=\\
	\int d\pv\, d\kv |\hat{\zeta}_{ij}(\kv,\pv)|^2\sqrt{\frac{\mu_{ij}M}{m_k(m_i+m_j)}k^{2}+\frac{\mu_{ij}}{M}p^{2}}
\end{multline}
Noticing  that $\|Q_{ij}\|=\|\hat{\zeta}_{ij}\|$, we  obtain the following equivalent expression for  $\mathcal E_{\alpha}$
\begin{align}\label{formdiv}
	\mathcal E_{\alpha}(\Psi)& = \; (u,H_f u)  \nonumber\\  
&+\sum_{i\prec j}\frac{2\pi}{\mu_{ij}}  \Bigg[ \alpha_{ij}  \| \hat{\zeta}_{ij}\|^2 
 + \sqrt{2 \mu_{ij}} \int \!\!d\pv\, d\kv \,|\hat{\zeta}_{ij}(\kv,\pv)|^2\sqrt{\frac{Mk^2}{2m_k(m_i+m_j)}  +\frac{p^2}{2M}} \nonumber \\
							&-\frac{1}{(2\pi)^2\mu_{jk}}2\Re\int \!\!d\pv\, d\kv_1\, d\kv_2\frac{\overline{\hat{\zeta}_{ij}(\kv_2,\pv)} \, \hat{\zeta}_{jk}(
							 \kv_1,\pv)}{\frac{k_1^2}{2\mu_{ij}}+\frac{k_2^2}{2\mu_{jk}}+\frac{1}{m_j}\kv_1\cdot\kv_2+\frac{p^2}{2M}}\Bigg]
\end{align}
We define the form domain as follows (see remark \eqref{re1} at the end of this section)
\beq\label{domdiv}
	D(\mathcal E_{\alpha})=\left\{\Psi\in L^2(\R^9)   \,| \,\Psi=u+\mathcal{G}_p \zeta,\, |\nabla u| \in L^2(\R^9), \zeta=\{\zeta_{ij}\},\,\zeta_{ij}\in H^{1/2}(\R^6)\right\}
\eeq
where  $\mathcal{G}_p\zeta = \sum_{i \prec j} \mathcal{G}_p \zeta_{ij} $ is given by
\be
(GQ_{ij})(\Xv)= (\mathcal{G}_p \zeta_{ij}) (\xv_i - \xv_j, \xv_k - \xv_j, \xv_{cm})\, ,\;\;\;\;\;\; \xv_{cm}=\f{m_1\xv_1 +m_2\xv_2 +m_3\xv_3}{M}
\ee
In particular
\beq
	(\widehat{\mathcal{G}_p \zeta_{ij}})(\kv_{ij},\kv_{kj},\pv)=\f{1}{\sqrt{2\pi}\mu_{ij}} \, \frac{\hat{\zeta}_{ij}(\kv_{kj},\pv)}{\frac{k_{ij}^2}{2\mu_{ij}}+\frac{k_{kj}^2}{2\mu_{jk}}+\frac{\kv_{ij}\cdot\kv_{kj}}{m_j} +\frac{p^2}{2M}}
\eeq
where with $\kv_{ij},\kv_{kj}$ we denote the conjugate variables to  $\xv_{i}-\xv_j$ and $\xv_{k}-\xv_{j}$ respectively.

\n
We remark that the dependence on the variable $\pv$ (the total momentum) in the last two integrals in \eqref{formdiv}  is essentially irrelevant. This fact can be  seen introducing a different decomposition for the elements of $D(\mathcal E_{\alpha})$. More precisely, we define $\mathcal{G}\zeta = \sum_{i \prec j} \mathcal{G} \zeta_{ij} $, where

\beq
(\widehat{\mathcal{G} \zeta_{ij}})(\kv_{ij},\kv_{kj},\pv)=\f{1}{\sqrt{2\pi}\mu_{ij}} \, \frac{\hat{\zeta}_{ij}(\kv_{kj},\pv)}{\frac{k_{ij}^2}{2\mu_{ij}}+\frac{k_{kj}^2}{2\mu_{jk}}+\frac{\kv_{ij}\cdot\kv_{kj}}{m_j} }
\eeq
and we set
\beq
\Psi= u +\mathcal G_p \zeta = v + \mathcal G \zeta\,, \;\;\;\;\;\;\;\; \Psi \in D(\mathcal E_{\alpha})
\eeq
By a direct computation we find
\begin{align}\label{formdiv2}
	\mathcal E_{\alpha}(\Psi)=&(\Psi,h_{cm}  \Psi) + ( v, h_f v) \nonumber\\
	+&\sum_{i\prec j}\frac{2\pi}{\mu_{ij}}\left[\alpha_{ij}\|\hat{\zeta}_{ij}\|^2  
	+ \sqrt{2 \mu_{ij}} \int \!\! d\pv\, d\kv \,|\hat{\zeta}_{ij}(\kv,\pv)|^2\sqrt{\frac{Mk^2}{2m_k(m_i+m_j)}  } \right. \nonumber \\
							- &  \frac{1}{(2\pi)^2\mu_{jk}}2\Re\int \!\! d\pv\, d\kv_1\, d\kv_2\frac{\overline{\hat{\zeta}_{ij}(\kv_2,\pv)} \, \hat{\zeta}_{jk}(
							 \kv_1,\pv)}{\frac{k_1^2}{2\mu_{ij}}+\frac{k_2^2}{2\mu_{jk}}+\frac{1}{m_j}\kv_1\cdot\kv_2}\Bigg]
\end{align}
where 
\beq
h_{cm} =\f{p^2}{2M}\,, \;\;\;\;\;\; h_f= H_f - h_{cm}
\eeq 
From \eqref{formdiv2} it is clear that the dependence on the variable $\pv$ is only parametric and therefore irrelevant.  
In particular, for factorized wave function $\Psi= f \cdot \psi$, where $f$ is a function of the center of mass coordinate and $\psi $ is a function of the relative coordinates, we obtain

\beq
\mathcal E_{\alpha}(\Psi) = \|\psi\|^2 (f, h_{cm} f) +  \|f\|^2 \mathcal F_{\alpha}(\psi)
\eeq
where 
\beq\label{dom3}
D(\mathcal F_{\alpha})=\Big\{ \psi \in L^2(\R^6)\,|\, \psi=w+\mathcal G \xi, \; |\nabla w| \in L^2(\R^6), \; \xi=\{\xi_{ij}\} , \; \xi_{ij}\in H^{1/2}(\R^3)  \Big\}
\eeq
\begin{align}\label{formdiv3}
	\mathcal F_{\alpha}(\psi)=& ( w, h_f w) +\sum_{i\prec j}\frac{2\pi}{\mu_{ij}}\left[\alpha_{ij}\|\hat{\xi}_{ij}\|^2  
	+ \sqrt{2 \mu_{ij}} \int \! d\kv \,|\hat{\xi}_{ij}(\kv)|^2\sqrt{\frac{Mk^2}{2m_k(m_i+m_j)}  } \right. \nonumber \\
							-&\frac{1}{(2\pi)^2\mu_{jk}}2\Re\int \! d\kv_1\, d\kv_2\frac{\overline{\hat{\xi}_{ij}(\kv_2)} \, \hat{\xi}_{jk}(
							 \kv_1)}{\frac{k_1^2}{2\mu_{ij}}+\frac{k_2^2}{2\mu_{jk}}+\frac{1}{m_j}\kv_1\cdot\kv_2}\Bigg]
\end{align}

\n
This means that, choosing the center of mass reference frame, one can reduce the analysis to the quadratic form $\mathcal F_{\alpha}$.

\n
We underline that the above construction procedure has the only aim to arrive at the definitions  \eqref{formdiv}, \eqref{domdiv} or, if one chooses the center of mass reference frame, \eqref{dom3}, \eqref{formdiv3}. Such definitions are our starting point for the rigorous construction of the Hamiltonian of the three particle system under suitable symmetry constraints. 

\begin{remark}\label{re1}
We note that in \eqref{domdiv} the choice of the charges $\zeta_{ij} \in H^{1/2}(\R^6)$ (or in \eqref{dom3} the choice $\xi_{ij} \in H^{1/2}(\R^3)$) guarantees that all terms in the square brackets of \eqref{formdiv} (or \eqref{formdiv3}) are finite. However, it is not a priori clear for which class of charges the {\em sum} of the last two  terms in the square brackets is   finite. Therefore our choice has some degree of arbitrariness and in fact, in some relevant cases, a larger class of charges must be considered (\cite{CDFMT2}).
\end{remark}

\section{Three bosons for non zero angular momentum} 
For a system of three identical bosons of unitary masses, considered  in the center of mass reference frame, the Hilbert space of states is $L^2_s(\R^6)$, i.e., the space of square-integrable functions symmetric under the exchange of particle coordinates. 
In the Fourier space, we fix a pair of coordinates $\kv_1$,$\kv_2$ defined in \eqref{coo} (with $\pv=0$), e.g., $\kv_1=\pv_1, \kv_2=\pv_3$ and then $\pv_2=-\kv_1-\kv_2$, so that the symmetry condition reads $\hat{\psi}(\kv_1,\kv_2)= \hat{\psi}(\kv_2,\kv_1)=\hat{\psi}(\kv_1,-\kv_1 - \kv_2)$. 

\n
Moreover the symmetry condition  implies that $\alpha_{ij}=\alpha$ for all $i\prec j$ and, from \eqref{bcn}, that $Q_{12}=Q_{23}=Q_{31}$ and hence $\xi_{12}=\xi_{23}=\xi_{31}=\xi.$
 Then we have the following expression for the potential
\beq
	(\widehat{\mathcal{G}\xi})(\kv_1,\kv_2)=\frac{2}{\sqrt{2\pi}}\frac{\hat{\xi}(\kv_1)+\hat{\xi}(\kv_2)+\hat{\xi}(-\kv_1-\kv_2)}{k_1^2+k_2^2+\kv_1	\cdot\kv_2}
\eeq
\n
With an abuse of notation we define the  quadratic form associated to the STM operator in the bosonic case  as
\beq
	D(\form)=\Big\{\psi\in L^2_s(\R^6)\,|\, \psi=w+\mathcal{G}\xi,\, \, |\nabla w| \in L^2_s(\R^6), \; \xi\in H^{1/2}(\R^3) \Big\}
\eeq
\beq\label{formbos}
	\form(\psi)=(w, h_f w)+ \f{12}{\pi} \, \Phi_\alpha(\xi)
\eeq
where the form $\Phi_\alpha$ acting on the charge $\xi \in D(\Phi_\alpha)=H^{1/2}(\R^3)$ is given by 
\beq	
\Phi_\alpha(\xi) = \Phi^{\textup{diag}}(\hat{\xi})+\Phi^\textup{off}(\hat{\xi})+
\alpha \int \!\! d\kv\,   | \hat{\xi}(\kv)|^2
\eeq
and the diagonal part and the off-diagonal part are defined respectively by
\begin{align}
	\Phi^\textup{diag}(f)&=  \f{ \sqrt{3} \, \pi^2 }{2} \int \!\! d\kv\, k \, |f(\kv)|^2\label{formdiag}\\
	\Phi^\textup{off}(f)&=- \int \!\! d\kv_1\, d\kv_2\,  \frac{\overline{f(\kv_1)}f(\kv_2)}{k_1^2+k_2^2+\kv_1
															\cdot\kv_2 }\label{formoff}
\end{align}
It easy to see that if one can find an $f_0$ such that $\Phi^\textup{diag}(f_0) + \Phi^\textup{off}(f_0) < 0$ then, by a scaling argument, one shows that the form  \eqref{formbos} is unbounded from below. As a matter of fact, such $f_0$ can be explicitly  constructed and it is rotationally invariant (for the proof one can follows the line of   \cite{FT}, section 4). This fact is not surprising since it is known that the STM operator is not self-adjoint and all its self-adjoint extensions are unbounded from below, showing the occurrence of the Thomas effect (\cite{fm}). 

\n
Following \cite{MM}, we define  $\mathcal H_0 = \{ \psi \in L_s^2(\R^6) \, |\, \hat{\psi} =\hat{\psi} (|\kv_1|, |\kv_2|) \}$, which is an invariant subspace for the STM operator, and we consider its orthogonal complement $\mathcal H_0^{\perp}$. In the next theorem we characterize our quadratic form in $\mathcal H_0^{\perp}$. 

\begin{theorem}\label{mainth}
	The quadratic form \eqref{formbos},\ldots,\eqref{formoff} restricted to the subspace $\mathcal H_0^{\perp}$ is bounded from below and closed for any $\alpha \in \R$.
\end{theorem}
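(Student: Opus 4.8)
The plan is to reduce the assertion for $\form$ to a coercivity estimate for the charge part $\Phi_\alpha$ alone. Since $(w,h_f w)\geq 0$, the decomposition $\psi=w+\mathcal G\xi$ gives $\form(\psi)\geq \frac{12}{\pi}\Phi_\alpha(\xi)$, and, arguing as in the one-body model of Section 2 and as in \cite{FT,CDFMT}, both boundedness from below and closedness of $\form$ on $\mathcal H_0^\perp$ will follow once one proves an inequality of the type
\[
\Phi_\alpha(\xi)\ \geq\ c\int d\kv\,k\,|\hat\xi(\kv)|^2\ -\ C\int d\kv\,|\hat\xi(\kv)|^2,\qquad c>0,\ C\in\R,
\]
for every charge $\xi$ whose state lies in $\mathcal H_0^\perp$. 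The first term on the right is a multiple of $\Phi^{\mathrm{diag}}(\hat\xi)$ and controls the homogeneous $\dot H^{1/2}(\RT)$ seminorm, so together with the $L^2$ remainder it dominates the full $H^{1/2}(\RT)$ norm; this coercivity is what yields closedness, while the lower bound on $\form$ follows after estimating $\|\xi\|_{L^2}$ by $\|\psi\|$ and treating the $\alpha$-term as a lower-order perturbation. Following \cite{MM}, the restriction to $\mathcal H_0^\perp$ translates into the condition that $\hat\xi$ carry no rotationally invariant ($\ell=0$) component.

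First I would decompose into partial waves. The kernel $(k_1^2+k_2^2+\kv_1\cdot\kv_2)^{-1}$ of $\Phi^{\mathrm{off}}$ depends on $\kv_1,\kv_2$ only through $k_1$, $k_2$ and the angle between them, hence it is invariant under simultaneous rotations and both $\Phi^{\mathrm{diag}}$ and $\Phi^{\mathrm{off}}$ are block-diagonal in the spherical-harmonic basis $\hat\xi(\kv)=\sum_{\ell,m}\xi_{\ell m}(k)\,Y_{\ell m}(\hat\kv)$. Using Neumann's formula, the Legendre coefficients of the kernel are explicit,
\[
\frac{1}{k_1^2+k_2^2+k_1k_2\,y}=\sum_{\ell\geq 0}(2\ell+1)\,g_\ell(k_1,k_2)\,P_\ell(y),\qquad g_\ell(k_1,k_2)=\frac{(-1)^\ell}{k_1k_2}\,Q_\ell\Big(\frac{k_1^2+k_2^2}{k_1k_2}\Big),
\]
where $Q_\ell$ is the Legendre function of the second kind, and the addition theorem for $P_\ell$ reduces the problem to a family, indexed by $\ell\geq 1$, of one-dimensional radial forms in the variable $k$.

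Next I would exploit the exact scale invariance. The radial kernel $g_\ell$ is homogeneous of degree $-2$ and $\Phi^{\mathrm{diag}}_\ell$ has the matching scaling, so the substitution $k=e^t$, $v_\ell(t)=e^{2t}\xi_\ell(e^t)$, turns $\Phi^{\mathrm{diag}}_\ell$ into $\frac{\sqrt3\,\pi^2}{2}\|v_\ell\|_{L^2(\R)}^2$ and $\Phi^{\mathrm{off}}_\ell$ into a convolution with the even kernel $K_\ell(t)=Q_\ell(2\cosh t)$ (note $\frac{k_1^2+k_2^2}{k_1k_2}=2\cosh(t_1-t_2)$). A Fourier transform then diagonalizes each channel completely, and the whole theorem reduces to the single scalar inequality
\[
\frac{\sqrt3\,\pi^2}{2}+S_\ell(x)\ \geq\ c\,\frac{\sqrt3\,\pi^2}{2}\ >0\qquad\text{for all }\ell\geq 1,\ x\in\R,
\]
where $S_\ell(x)$ is, up to an explicit constant, the Fourier transform of $K_\ell$ and is computable in closed form through the integral representation of $Q_\ell$; the matching upper bound $\frac{\sqrt3\,\pi^2}{2}+S_\ell(x)\leq C\,\frac{\sqrt3\,\pi^2}{2}$, needed for the two-sided form-equivalence, is immediate from the boundedness of the symbol.

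The hard part is exactly this uniform positivity estimate for $\ell\geq 1$. Since $Q_\ell(z)$ decreases in $\ell$ for fixed $z>1$, one expects $\sup_x|S_\ell(x)|$ to be decreasing in $\ell$, so that the critical channel is the lowest one, $\ell=1$; the delicate points are to make this monotonicity rigorous and uniform and to locate the minimum of the symbol, obtaining a genuine positive gap $c>0$ rather than mere nonnegativity (which is what guarantees closedness and not only a lower bound). This is precisely where the s-wave is exceptional: for $\ell=0$ the symbol $S_0$ is large enough that $\frac{\sqrt3\,\pi^2}{2}+S_0(x)$ changes sign, which is the analytic origin of the Thomas effect and of the explicit rotationally invariant $f_0$ with $\Phi^{\mathrm{diag}}(f_0)+\Phi^{\mathrm{off}}(f_0)<0$ noted above and in \cite{fm}. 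Once the gap is established, the coercivity inequality follows, and the $\alpha\int d\kv\,|\hat\xi|^2$ term is absorbed by interpolating $L^2$ between $\dot H^{1/2}$ and $L^2$ with an arbitrarily small share of the diagonal form, completing the proof of boundedness from below and closedness on $\mathcal H_0^\perp$.
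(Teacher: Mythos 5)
Your overall route is the same as the paper's: expand the charge in spherical harmonics, pass to logarithmic variables so that each partial-wave form becomes a convolution, diagonalize it by Fourier transform, and reduce the theorem to a uniform estimate on the resulting symbol, with the $l=0$ channel singled out as the origin of the Thomas effect. The problem is that this reduction is essentially all you provide: the uniform bound on the symbol for $l\geq 1$, which you yourself call ``the hard part'', is precisely the content of the paper's Lemma~\ref{lemmaSl1} and Proposition~\ref{stima}, and it is nowhere proved in your proposal. In the paper it requires the explicit computation $S_2(0)=\pi^2\left(\frac{5}{3}\pi-3\sqrt{3}\right)$ together with a genuinely delicate monotonicity argument in $k$ (the auxiliary function $g(k)$ with $g'(k)\geq 0$) to control $S_2(0)-S_2(k)$ uniformly, producing the constant $\Gamma\simeq 0.101<1$; nothing in your sketch replaces this. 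Moreover, you misidentify the critical channel. In the bosonic form one has $F_{l}^{\textup{off}}\geq 0$ for $l$ odd and $F_{l}^{\textup{off}}\leq 0$ for $l$ even, and the monotonicity actually available (Lemma 3.5 of \cite{CDFMT}, quoted in the paper) is monotonicity of $S_l(k)$ in $l$ \emph{within each parity class}, not across parities as you assume when you infer from the monotonicity of $Q_l$ that $\sup_x|S_l(x)|$ decreases in $l$. Consequently, on $\mathcal H_0^{\perp}$ the channel that threatens the lower bound is $l=2$ (the odd channels, including $l=1$, contribute with the favorable sign and can simply be discarded); the channel $l=1$ is critical only for the opposite inequality, and for the fermionic problem of Section~4, where the off-diagonal sign is reversed. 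A proof organized around locating the minimum at $l=1$ would be estimating the harmless channels and ignoring the dangerous ones.

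Your treatment of $\alpha<0$ also fails as stated. You propose to ``estimate $\|\xi\|_{L^2}$ by $\|\psi\|$'' and then absorb $\alpha\int d\kv\,|\hat\xi(\kv)|^2$ by interpolation with an arbitrarily small share of the diagonal form. No a priori bound of $\|\xi\|_{L^2}$ by $\|\psi\|_{L^2}$ is available: it would amount to boundedness of the projection $\psi\mapsto\mathcal G\xi$ associated with the decomposition $\psi=w+\mathcal G\xi$, which is essentially the closedness you are trying to establish; and no interpolation having $L^2$ itself as an endpoint can absorb a negative multiple of $\|\xi\|_{L^2}^2$ into a small multiple of the $\dot H^{1/2}$ seminorm. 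The paper's mechanism is different and is the step you are missing: for $\alpha\leq 0$ one switches to the regularized decomposition $\psi=w^{\lambda}+\mathcal{G}^{\lambda}\xi$, $\lambda>0$, for which $\form(\psi)=(w^{\lambda},h_f w^{\lambda})+\lambda\|w^{\lambda}\|^2-\lambda\|\psi\|^2+\frac{12}{\pi}\Phi_{\alpha}^{\lambda}(\xi)$, and the regularized diagonal form satisfies $\Phi_{\lambda}^{\textup{diag}}(\hat\xi)\geq\pi^2\sqrt{\lambda}\,\|\hat\xi\|^2$; choosing $\lambda\geq\alpha^2/\bigl[\pi^4(1-\Gamma)^2\bigr]$ makes $\Phi_{\alpha}^{\lambda}\geq 0$ and yields exactly the lower bound $-\frac{\alpha^2}{\pi^4(1-\Gamma)^2}\|\psi\|^2$, after which closedness follows from the resulting two-sided control of $\|\nabla w^{\lambda}\|$ and $\|\xi\|_{H^{1/2}}$. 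Without this device (or an equivalent one) your argument covers at best $\alpha\geq 0$.
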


\n
We start with some preliminaries, following the line of \cite{CDFMT}.  
Given $f \in L^2(\R^3)$, 
we consider the  expansion
\beq
	f(\kv)=\sum_{l=0}^\infty\sum_{n=-l}^l f_{ln}(k)Y_l^n(\theta,\varphi)
\eeq
where $Y_l^n$ is the the spherical harmonic of order $l, n.$ 
Using the above expansion one can obtain the following decompositions for $\Phi^\textup{off}$ and $\Phi^\textup{diag}$ (see \cite{CDFMT}, Lemma 3.1)
\begin{align}
	\begin{split}\label{phipd}
		\Phi^\textup{diag}(f)=&\sum_{l=0}^{+\infty}\sum_{n=-l}^l F^\textup{diag}(f_{ln})
	\end{split}\\
	\begin{split}\label{phipo}
	\Phi^\textup{off}(f)=&\sum_{l=0}^{+\infty}\sum_{n=-l}^lF_{l}^\textup{off}(f_{ln})
	\end{split}
\end{align} 
with $F^\textup{diag}$ and $F_{l}^\textup{off}$ acting  as
\begin{align*}
	F^\textup{diag}(g)&=  \f{ \sqrt{3} \, \pi^2}{2} \int_0^{+\infty} \!\!dk\, k^3\,|g(k)|^2\\
	F_{l}^\textup{off}(g)&=-2\pi \int_0^{+\infty}\!\! dk_1\int_0^{+\infty}\!\!dk_2\, k_1^2\, \overline{g(k_1)}k_2^2\, g(k_2)\int_{-1}^1 \!\!dy \,\frac{P_l(y)}{k_1^2+k_2^2+k_1k_2y}
\end{align*}
where $P_l$ denotes the Legendre polynomial of order $l.$ 
Proceeding as in Lemma 3.2 of \cite{CDFMT}, one  proves that
\begin{align}
	& F_{l}^\textup{off}(g) \geq 0 &\text{for l odd}\label{Fdis}\\
	&F_{l}^\textup{off}(g)\leq 0 & \text{for l even}\label{Fpari}
\end{align}
Moreover $F_{l}^\textup{off}$ can be diagonalized. Setting
\beq
	g^\sharp(k)=\frac{1}{\sqrt{2\pi}}\int dx e^{-ikx}e^{2x}g(e^x)
\eeq
we have (for details see \cite{CDFMT}, Lemma 3.3)
\begin{align}
	F^\textup{diag}(g)&=\frac{\sqrt{3} \, \pi^2 }{2}\int dk |g^\sharp(k)|^2 \label{Fdiag} \\
		F_{l}^\textup{off}(g) &=	-\int dk\, S_l(k)|g^\sharp(k)|^2\label{Foff}
\end{align}
where
\begin{equation}
	S_l(k)=
				\left\{
				\begin{aligned}
					& \pi^2 \int_{-1}^1dy\,P_l(y)\frac{\cosh\left(k\arcsin\frac{y}{2}\right)}{\cos\left(\arcsin\frac{y}{2}\right)\cosh\left(k\frac{\pi}{2}\right)}\quad \quad\quad \text{$l$ even}\\
					-& \pi^2 \int_{-1}^1dy\,P_l(y)\frac{\sinh\left(k\arcsin\frac{y}{2}\right)}{\cos\left(\arcsin\frac{y}{2} \right) \sinh\left(k\frac{\pi}{2}\right)  }\quad \quad \quad \text{$l$ odd}
				\end{aligned}
				\right.
\end{equation}
Therefore the comparison between $F_{l}^\textup{off}$ and $F^\textup{diag}$  is reduced to the study of $S_l(k).$ We first notice that $S_l(k)$ as a function of $l$ (and for any fixed $k$) is   decreasing for $l$ even and increasing for $l$ odd (see Lemma 3.5 in \cite{CDFMT}). For the estimate, we  distinguish the cases of even and odd $l$. 
\begin{lemma}\label{lemmaSl1}
	For $l$ even and any $k \in \R$
	\beq\label{slpari}
		0\, \leq S_l(k)\leq 
		\pi^2 \Bigg(\f{50}{27}\pi - \f{10}{3} \sqrt{3} + \f{\sqrt{11}}{9} - \f{10}{9} t_0 \Bigg), \;\;\;\;\;\;\;\; l \neq 0
	\eeq
	where  $t_0=\arcsin(1/\sqrt{12}) \simeq 0.293$
	and 
	\be\label{s0}
	0 \leq S_0(k) \leq 4 \pi^2.
	\ee
Furthermore, for $l$ odd and any $k\in \R$
\beq
	\pi^2\Bigg(    \frac{4}{3}\sqrt{3}-\frac{8}{\pi} \Bigg)	\leq S_l(k)\leq0 
	\eeq
\end{lemma}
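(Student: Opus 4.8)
I would first dispose of the sign statements, which are essentially free. Combining the diagonalized forms (\ref{Fdiag})--(\ref{Foff}) with the sign inequalities (\ref{Fdis})--(\ref{Fpari}), one has for every admissible $g$ that $\int dk\, S_l(k)\,|g^\sharp(k)|^2=-F_{l}^{\textup{off}}(g)$ is $\le 0$ for $l$ odd and $\ge 0$ for $l$ even; since $g$ (hence $g^\sharp$) is arbitrary and $S_l$ is continuous, this forces $S_l(k)\le 0$ for all $k$ when $l$ is odd and $S_l(k)\ge 0$ for all $k$ when $l$ is even. Thus only the quantitative one-sided bounds remain. Here I would invoke the monotonicity of $S_l(k)$ in $l$ recalled just before the statement (decreasing in $l$ for $l$ even, increasing for $l$ odd): for even $l\ge 2$ it gives $0\le S_l(k)\le S_2(k)$, for odd $l$ it gives $S_1(k)\le S_l(k)\le 0$, while $l=0$ must be handled on its own. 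Hence it suffices to bound $S_2(k)$ from above, $S_1(k)$ from below, and $S_0(k)$ from above, uniformly in $k\in\R$.

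For all three indices the plan is to pass to the variable $\phi=\arcsin(y/2)\in[0,\pi/6]$, i.e.\ $y=2\sin\phi$ and $\cos(\arcsin(y/2))=\cos\phi$, and to insert $P_0=1$, $P_1(y)=y$, $P_2(y)=\tfrac12(3y^2-1)$. Using the parity of the integrands, each $S_l$ collapses to a one-dimensional integral over $[0,\pi/6]$ against the weight $\cosh(k\phi)/\cosh(k\pi/2)$ (for $l$ even) or $\sinh(k\phi)/\sinh(k\pi/2)$ (for $l$ odd); both weights are even in $k$, so I may take $k\ge 0$. For $l=0$ the integrand is positive and the weight is $\le 1$ (as $\phi\le\pi/6<\pi/2$), so $S_0(k)\le\pi^2\int_{-1}^1\frac{2\,dy}{\sqrt{4-y^2}}=\tfrac{2}{3}\pi^3\le 4\pi^2$, which is (\ref{s0}). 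For $l=1$ one finds $S_1(k)=-8\pi^2(\sinh(k\pi/2))^{-1}\int_0^{\pi/6}\sin\phi\,\sinh(k\phi)\,d\phi$, and the convexity inequality $\sinh(k\phi)\le\frac{2\phi}{\pi}\sinh(k\pi/2)$ (valid since $\sinh$ is convex and vanishes at $0$) yields $S_1(k)\ge-16\pi\int_0^{\pi/6}\phi\sin\phi\,d\phi=\pi^2\big(\tfrac{4}{3}\sqrt3-\tfrac{8}{\pi}\big)$, the stated odd bound.

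The even bound $S_2(k)\le\pi^2\big(\tfrac{50}{27}\pi-\tfrac{10}{3}\sqrt3+\tfrac{\sqrt{11}}{9}-\tfrac{10}{9}t_0\big)$ is the crux and the place where I expect the genuine difficulty. After the substitution $S_2(k)=2\pi^2(\cosh(k\pi/2))^{-1}\int_0^{\pi/6}(12\sin^2\phi-1)\cosh(k\phi)\,d\phi$, and here the device used for $l=1$ fails: the factor $12\sin^2\phi-1=5-6\cos2\phi$ vanishes and changes sign exactly at $\phi=t_0=\arcsin(1/\sqrt{12})$, so the weight cannot simply be discarded against $\cosh(k\pi/2)$. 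The plan is to exploit that $\phi\mapsto\cosh(k\phi)/\cosh(k\pi/2)$ is increasing on $[0,\pi/6]$: on the positive part $[t_0,\pi/6]$ I would bound it by its value at $\pi/6$, and on the negative part $[0,t_0]$ by its value at $0$, which reduces the estimate to controlling uniformly in $k$ a single expression built from $\cosh(k\pi/6)$ and $\cosh(k\pi/2)$. The two elementary integrals $\int_{t_0}^{\pi/6}$ and $\int_0^{t_0}$ of $5-6\cos2\phi$ produce exactly the constants $\tfrac{5\pi}{6}$, $\tfrac{3\sqrt3}{2}$, $5t_0$, $\tfrac{\sqrt{11}}{2}$ (using $\sin 2t_0=\sqrt{11}/6$), and an elementary (deliberately non-sharp) estimate of the remaining one-variable expression then gives the constant in (\ref{slpari}). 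The real obstacle is precisely this sign change: one must control the competition between the positive and negative parts of the integral as $k$ ranges over $\R$, rather than bounding the weight pointwise by $1$ as for $l=0$ or by a linear function as for $l=1$.
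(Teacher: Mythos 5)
Your proposal reproduces the paper's architecture almost exactly in its skeleton: the sign statements from \eqref{Fdis}, \eqref{Fpari} together with \eqref{Fdiag}, \eqref{Foff}; the reduction to $S_0$, $S_1$, $S_2$ by monotonicity of $S_l$ in $l$; the substitution $y=2\sin\phi$; and, for $l=2$, the very same two endpoint bounds on the weight ($\cosh(k\phi)\ge 1$ on $[0,t_0]$, where the integrand is negative, and $\cosh(k\phi)\le\cosh(k\pi/6)$ on $[t_0,\pi/6]$, where it is positive). Your $l=0$ estimate is what the paper dismisses as straightforward. Your $l=1$ argument, on the other hand, is genuinely different and is an improvement on the paper: the chord inequality $\sinh(k\phi)\le\frac{2\phi}{\pi}\sinh(k\pi/2)$ gives in two lines
\[
S_1(k)\;\ge\;-16\pi\int_0^{\pi/6}\phi\sin\phi\,d\phi\;=\;\pi^2\Big(\frac{4}{3}\sqrt{3}-\frac{8}{\pi}\Big),
\]
a complete proof with the exact constant, whereas the paper argues through $S_1(0)<0$, $\lim_{k\to\infty}S_1(k)=0$ and the assertion $S_1'(k)\neq 0$ for $k>0$, which it does not prove.

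The gap is the $l=2$ endgame, precisely the step you leave as ``an elementary (deliberately non-sharp) estimate''. Set $s(\phi)=12\sin^2\phi-1$, $a=\int_0^{t_0}|s|=\frac{\sqrt{11}}{2}-5t_0$ and $b=\int_{t_0}^{\pi/6}s=a+\frac{5\pi}{6}-\frac{3\sqrt{3}}{2}$. Your reduction gives
\[
S_2(k)\;\le\;2\pi^2\,h(k),\qquad h(k)=\frac{b\cosh\big(k\frac{\pi}{6}\big)-a}{\cosh\big(k\frac{\pi}{2}\big)},
\]
and since the constant in \eqref{slpari} equals $2\pi^2\big(\frac{10}{9}b-a\big)$, what must be proved is exactly $h(k)\le\frac{10}{9}b-a$ for all $k$, i.e.\ $g(k):=a+\big(\frac{10}{9}b-a\big)\cosh\big(k\frac{\pi}{2}\big)-b\cosh\big(k\frac{\pi}{6}\big)\ge 0$. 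This is the heart of the paper's proof, and it is not as innocent as your sketch suggests: $h$ is \emph{not} maximized at $k=0$, because $9a-8b>0$ forces $h'(0^+)>0$, so one cannot dispose of the $k$-dependence by evaluating at the origin, and the actual maximum sits at a transcendental critical point. The paper proves $g\ge 0$ by monotonicity: $g(0)=b/9>0$ and
\[
g'(k)=\frac{\pi}{2}\Big(\frac{10}{9}b-a\Big)\sinh\big(k\tfrac{\pi}{2}\big)\bigg[\,1-\frac{b}{10b-9a}\,\frac{3\sinh\big(k\frac{\pi}{6}\big)}{\sinh\big(k\frac{\pi}{2}\big)}\bigg]\;\ge\;0,
\]
the bracket being positive because $3\sinh\big(k\frac{\pi}{6}\big)\le\sinh\big(k\frac{\pi}{2}\big)$ and $a<b$. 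Your plan can be completed by exactly this argument (or an equivalent one), but as written the decisive uniform-in-$k$ inequality is asserted rather than proved, and nothing in your sketch guarantees that the bound comes out as the specific value $\frac{10}{9}b-a$ that produces the constant in \eqref{slpari}.
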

\begin{proof}
	Let us consider the case $l\neq 0$ and even. The positivity of $S_l(k)$  follows  from \eqref{Fpari} and \eqref{Foff}.  Since $S_l(k)$ is decreasing in $l$, we have  $S_l(k)\leq S_2(k)$, where  $S_2(k)$  is an even function.  
An explicit integration gives 
	\beq
		\begin{split}
			S_2(0)=& \; \pi^2 \int_{-1}^1dy\,(3y^2-1)\frac{1}{2\cos\left(\arcsin \frac{y}{2}\right)}\\
						=&\; \pi^2 \int_{-\pi/6}^{\pi/6}dx\,\left(12\sin^2x-1\right)=\pi^2 \left(\frac{5}{3}\pi-3\sqrt{3}\right)
		\end{split}
	\eeq
	Let us estimate the difference  $\displaystyle{S_2(0)-S_2(k)
	}$ for any positive $k.$ We have
	\begin{align}
		S_2(0)-S_2(k)&= \pi^2 \int_{-1}^1\!\! dy\,\frac{3y^2-1}{2\cos\left(\arcsin\frac{y}{2}\right)}\left(1-\frac{\cosh\left(k\arcsin\frac{y}{2}
									\right)}{\cosh\left(k\frac{\pi}{2}\right)}\right)\\
								 &=2\pi^2 \,  I(k)
	\end{align}
	where
	\beq
		I(k)=\int_0^{\pi/6} \!\!  dt\,(12\sin^2t-1)\left(1-\frac{\cosh(kt)}{\cosh\left(k\frac{\pi}{2}\right)}\right)
	\eeq
	Since $s(t)=12\sin^2t-1$ is negative if $t<t_0$ and positive otherwise we can write
	\beq\label{I}
		\begin{split}
			I(k)&=-\int_0^{t_0}dt\,|s(t)|+\int_{t_0}^{\pi/6}dt\,s(t)\\
				&\phantom{{}={}}+\frac{1}{\cosh\left(k\frac{\pi}{2}\right)}\left[\int_0^{t_0}dt\,|s(t)|\cosh(kt)-\int_{t_0}^{\pi/6}dt\,s(t)\cosh(	
					kt)\right]\\
				&\geq \frac{1}{\cosh\left(k\frac{\pi}{2}\right)}\left[(b-a)\cosh\left(k\frac{\pi}{2}\right)+a-b\cosh\left(k\frac{\pi}{6}\right)
				 \right]
		\end{split}
	\eeq
	where $a=\int_0^{t_0}dt\,|s(t)|$ and $b= \int_{t_0}^{\pi/6}dt\,s(t)$, with $b-a>0$. Denoting 
	\beq
		g(k)=a+\left(\frac{10}{9}b-a\right)\cosh\left(k\frac{\pi}{2}\right)-b\cosh\left(k\frac{\pi}{6}\right)
	\eeq
	we can rewrite \eqref{I} as
	\beq
		I(k)\geq\frac{g(k)}{\cosh\left(k\frac{\pi}{2}\right)}-\frac{b}{9}.
	\eeq
Let us show that $g(k)\geq0.$ We have
	\beq\label{sqb}
		g'(k)=\frac{\pi}{2}\left(\frac{10}{9}b-a\right)\sinh\left(k\frac{\pi}{2}\right)\left[1-A\, \frac{3 \sinh\left(k\frac{\pi}{6}\right)}{
		\sinh\left(k\frac{\pi}{2}\right)}\right]
	\eeq
	where
	\beq
		A=\frac{b}{10b-9a}
	\eeq
	The term in square bracket in \eqref{sqb} is positive, then $g'(k)\geq 0$ which, together with $g(0)=\f{b}{9}$, implies $g(k)\geq0.$ Thus we find $S_2(0)-S_2(k)\geq - \f{2\pi^2}{9} b$. Inserting the explicit expression for $b$, we obtain the estimate \eqref{slpari}. 

\n
In the case $l=0$ the estimate \eqref{s0} is straightforward.

\n
Let us consider the case $l$ odd. 
	From \eqref{Fdis} and \eqref{Foff} it follows $S_l(k)\leq0$.  Noticing that $S_l(k)$ is an even function and it is increasing in $l$, we have $S_l(k)\geq S_1(k)$.
	
	\n
	Since $S_1(0)= \pi^2 \left(\frac{4}{3}\sqrt{3}-\frac{8}{\pi}\right) <0,$ $\lim_{k\to\infty}S_1(k)=0$ and $S'_1(k)\neq 0$  for $k>0$ we obtain the thesis.
\end{proof}
\n
The following estimate, which is the main tool in the proof of theorem \ref{mainth}, is a direct consequence of the above lemma.

\begin{proposition}\label{stima}
Let $f\in \!D(\Phi_{\alpha})$ such that $f(\kv)=\sum_{l=1}^{+\infty}\sum_{n=-l}^l f_{ln}(k)Y_l^n(\theta,\phi)$. Then 
\beq
		-\Gamma\, \Phi^\textup{diag}(f)\leq\Phi^\textup{off}(f)\leq \Lambda\, \Phi^\textup{diag}(f)
	\eeq
where 
\beq
\Gamma =\f{100}{27 \sqrt{3}} \,\pi - \f{20}{3} + \f{2 \sqrt{11}}{9\sqrt{3}} - \f{20}{9 \sqrt{3}}\, t_0 \simeq 0.101\,, \;\;\;\;\;\;\;\;\; \Lambda = -\f{8}{3 }+ \f{16}{\sqrt{3}\, \pi} \simeq 0.274
\eeq 
\end{proposition}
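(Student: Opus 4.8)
The plan is to reduce the claimed two-sided bound, mode by mode, to the uniform-in-$k$ estimates on $S_l(k)$ provided by Lemma \ref{lemmaSl1}. First I would insert the spherical-harmonic decompositions \eqref{phipd}, \eqref{phipo} together with the diagonalizations \eqref{Fdiag}, \eqref{Foff}. Since by hypothesis $f$ has vanishing $l=0$ component, both series run over $l\geq 1$, and for each fixed pair $(l,n)$ one has
\[
F^\textup{diag}(f_{ln})=\frac{\sqrt3\,\pi^2}{2}\int dk\,|f_{ln}^\sharp(k)|^2, \qquad F_l^\textup{off}(f_{ln})=-\int dk\,S_l(k)\,|f_{ln}^\sharp(k)|^2 .
\]
Thus the comparison between $\Phi^\textup{off}$ and $\Phi^\textup{diag}$ is governed entirely by the size of $-S_l(k)$ relative to the constant $\frac{\sqrt3\,\pi^2}{2}$, and it suffices to control $S_l(k)$ uniformly in $k$ on each parity class.

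For the upper bound I would split the sum over $l\geq1$ into odd and even parts. By \eqref{Fpari} every even term satisfies $F_l^\textup{off}(f_{ln})\leq 0$ and may be discarded. For the odd terms Lemma \ref{lemmaSl1} gives $S_l(k)\geq \pi^2\big(\frac43\sqrt3-\frac8\pi\big)$, hence $-S_l(k)\leq \pi^2\big(\frac8\pi-\frac43\sqrt3\big)$ and therefore $F_l^\textup{off}(f_{ln})\leq \pi^2\big(\frac8\pi-\frac43\sqrt3\big)\int dk\,|f_{ln}^\sharp|^2$. Summing over the odd modes and comparing with the (larger, since every diagonal term is nonnegative) full diagonal sum yields
\[
\Phi^\textup{off}(f)\leq \frac{2}{\sqrt3\,\pi^2}\,\pi^2\Big(\frac8\pi-\frac43\sqrt3\Big)\,\Phi^\textup{diag}(f)=\Big(-\frac83+\frac{16}{\sqrt3\,\pi}\Big)\Phi^\textup{diag}(f),
\]
which is exactly $\Lambda\,\Phi^\textup{diag}(f)$.

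The lower bound is symmetric: now the odd terms are nonnegative by \eqref{Fdis} and may be dropped, while for the even terms with $l\geq2$ the estimate \eqref{slpari}, $0\leq S_l(k)\leq \pi^2\big(\frac{50}{27}\pi-\frac{10}3\sqrt3+\frac{\sqrt{11}}9-\frac{10}9 t_0\big)$, gives $F_l^\textup{off}(f_{ln})\geq -\pi^2\big(\frac{50}{27}\pi-\frac{10}3\sqrt3+\frac{\sqrt{11}}9-\frac{10}9 t_0\big)\int dk\,|f_{ln}^\sharp|^2$. Summing and comparing with $\Phi^\textup{diag}$ produces $\Gamma=\frac{2}{\sqrt3\,\pi^2}\cdot\pi^2\big(\frac{50}{27}\pi-\frac{10}3\sqrt3+\frac{\sqrt{11}}9-\frac{10}9 t_0\big)$, which simplifies to the stated $\frac{100}{27\sqrt3}\pi-\frac{20}3+\frac{2\sqrt{11}}{9\sqrt3}-\frac{20}{9\sqrt3}t_0$. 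The one genuinely delicate point, and the reason the statement excludes the $l=0$ component, is that for $l=0$ only the much weaker bound \eqref{s0}, $S_0(k)\leq4\pi^2$, is available; this would raise the even constant to $\frac{8}{\sqrt3}>1$, destroying the lower bound and reflecting the instability of the s-wave sector. Since $\Gamma$ draws only on the $l\geq2$ even estimate, the absence of the $l=0$ mode is precisely what is needed, and once Lemma \ref{lemmaSl1} is granted the remaining work is just the bookkeeping of constants carried out above.
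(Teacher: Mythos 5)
Your proof is correct and follows essentially the same route as the paper's: expand in spherical harmonics, discard the sign-definite parity class in each direction, apply the uniform-in-$k$ bounds on $S_l(k)$ from Lemma \ref{lemmaSl1} (with the $l=0$ mode absent by hypothesis), and compare with the full diagonal form using its term-by-term nonnegativity. The only difference is presentational: the paper writes out the lower bound and calls the upper bound analogous, while you carry out both explicitly and add a correct remark on why the $l=0$ exclusion is essential.
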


\begin{proof}
Using \eqref{phipo}, \eqref{Fdis}, \eqref{Foff}, \eqref{slpari}, \eqref{Fdiag}, \eqref{phipd}, we have
\begin{align}
\Phi^\textup{off}(f)&= \sum_{l=1}^{+\infty}\sum_{n=-l}^l F_l^\textup{off} (f_{ln}) \geq \sum_{\stackrel{l=2}{ l\, even}}^{+\infty}\sum_{n=-l}^l F_l^\textup{off} (f_{ln}) \nonumber\\ 
&= - \sum_{\stackrel{l=2}{ l\, even}}^{+\infty}\sum_{n=-l}^l \int\!\!dk\, S_l(k) \, |f_{ln}^\sharp (k)|^2 \geq - \Gamma   \sum_{\stackrel{l=2}{ l\, even}}^{+\infty}\sum_{n=-l}^l \f{\sqrt{3} \, \pi^2}{2}\int\!\!dk \, |f_{ln}^\sharp (k)|^2 \nonumber\\
&\geq - \Gamma \, \Phi^\textup{diag}(f)
\end{align}
and analogously one also proves the estimate $\Phi^\textup{off}(f)\leq \Lambda\, \Phi^\textup{diag}(f)$.
\end{proof}

\begin{proof}[Proof of Theorem \ref{mainth}]
	We first consider the simpler case $\alpha> 0$. From the definition \eqref{formbos} and proposition \ref{stima} we obtain the positivity of $\mathcal F_{\alpha}$ 
		\beq
		\begin{split}
			\form(\psi)&= (w, h_fw)+ \f{12}{\pi} \Phi_{\alpha}(\xi)\\
								 &\geq \f{12}{\pi}\left[\Phi^\textup{off}(\hat{\xi})+\Phi^\textup{diag}(\hat{\xi})+
									\alpha \int d\kv|\hat{\xi}(\kv)|^2\right]\\
& \geq  \f{12}{\pi}\left[(1- \Gamma) \Phi^\textup{diag}(\hat{\xi})+
									\alpha \int d\kv|\hat{\xi}(\kv)|^2\right] \geq 0
	\end{split}
	\eeq

	\n
Let us  prove the closure of $\form.$	Let $\{\psi_n\} = \{w_n+\mathcal{G}\xi_n\}$ be a sequence in $D(\form)$ such that $\psi_n \rightarrow \psi \in L^2_s(\R^6)$
	 and $\form(\psi_n-\psi_m)\to0.$ 
	
	\n
	From $\form(\psi_n-\psi_m)\to0,$ the positivity of $h_f$ and the lower bound for $\Phi_\alpha$ it follows
	\begin{align}
		&\int d\kv_1\,d\kv_2\,(k_1^2+k_2^2)\big|(\hat{w}_n-\hat{w}_m)(\kv_1,\kv_2)\big|^2\to0\\
		&\|\xi_n - \xi_m\|_{H^{1/2}} \rightarrow 0
	\end{align}
	Thus there exist $v \in L^2_s(\R^6)$ and $\xi \in H^{1/2}(\R^3)$ such that 
	\begin{align}
		&\int d\kv\, \big|\sqrt{k_1^2+k_2^2}\,w_n(\kv_1,\kv_2)-v(\kv_1,\kv_2)\big|^2\to0\\
		&\|\xi_n - \xi\|_{H^{1/2}} \rightarrow 0
	\end{align}
	Defining  $\hat{w}=\frac{\hat{v}}{\sqrt{k_1^2+k_2^2}}$, for any $\varepsilon >0$ we have  
	\begin{align}
		&\int_{\R^6_\varepsilon}d\kv_1\,d\kv_2\, |(\hat{w}_n-\hat{w})(\kv_1,\kv_2)|^2\to0\label{wn}\\
		& \int_{\R^6_\varepsilon}
		\big|\bigl(\widehat{\mathcal{G}\xi}_n-\widehat{\mathcal{G}\xi}\bigr)(\kv_1,\kv_2)\big|^2\to0\label{xin}
	\end{align}
	where $\R^d_\varepsilon=\{\xv\in\R^d\,| \, x\geq\varepsilon\}.$ 
	From \eqref{wn} and \eqref{xin} in particular we obtain
	\begin{equation*}
		\psi=w+\mathcal{G}\xi \in D(\mathcal F_{\alpha})
	\end{equation*}
	and also $\form(\psi_n-\psi)\to0.$ This concludes the proof in the case $\alpha> 0.$
	
	\n
	In order to study the case $\alpha\leq 0$ it is convenient to consider the following decomposition for the generic $\psi$ in the domain of 
	$\form$
	\beq
		\psi=w^\lambda+\mathcal{G}^\lambda\xi
	\eeq
	where $\lambda >0$ and 
	\beq
		\mathcal{G}^\lambda\xi(\kv_1,\kv_2)= \f{2}{\sqrt{2\pi}} \frac{\hat{\xi}(\kv_1)+\hat{\xi}(\kv_2)+\hat{\xi}(-\kv_1-\kv_2)}
		{k_1^2+k_2^2+\kv_1\cdot\kv_2+\lambda}
	\eeq
	Thus $\mathcal{G}^\lambda\xi$ belongs to $L_s^2(\R^6)$ and $w^\lambda$ is in $H^1(\R^6).$ Moreover  the 
	quadratic form can be rewritten as
	\beq
		\form(\psi)=(w^\lambda,h_fw^\lambda)+\lambda\|w^\lambda\|^2-\lambda\|\psi\|^2+ \f{12}{\pi} \, \Phi_\alpha^\lambda(\xi)
	\eeq
	where
	\beq
		\Phi_\alpha^\lambda(\xi)=\left[\Phi_{\lambda}^\textup{diag}(\hat{\xi})
		+\Phi_{\lambda}^\textup{off}(\hat{\xi})+\alpha\int d\kv|\hat{\xi}(\kv)|^2\right]
	\eeq
	and
	\begin{align}
		\Phi_{\lambda}^\textup{diag}(f)&= \pi^2 \int d\kv |f(k)|^2\sqrt{\frac{3}{4}k^2+\lambda}\\
		\Phi_{\lambda}^\textup{off}(f)&=-\int d\kv_1\,d\kv_2\,\frac{\overline{f(\kv_1)}f(\kv_2)}
																			 {k_1^2+k_2^2+\kv_1\cdot\kv_2+\lambda}
	\end{align}
Proceeding as in the case $\lambda =0$ (\cite{CDFMT}), one has 	
	\beq\label{normaeq}
		-\Gamma \, \Phi_{\lambda}^\textup{diag}\leq\Phi_{\lambda}^\textup{off}\leq\Lambda \, \Phi_{\lambda}^\textup{diag}
	\eeq
	Therefore the quadratic form is bounded from below
	\be
	\mathcal F_{\alpha}(\psi) \geq - \f{ \alpha^2}{\pi^4 (1 - \Gamma)^2} \, \|\psi\|^2
	\ee
	The proof that  $\mathcal F_{\alpha}$ is closed follows exactly the same line of the proof of theorem 2.1 in \cite{CDFMT} and it is omitted for the sake of brevity.
	\end{proof}
	
	\n
	We conclude observing that theorem \ref{mainth} implies the existence of a self-adjoint operator $H_{\alpha,0}^{\perp}$ in $\mathcal H_0^{\perp}$ which, at least formally, coincides with the STM operator restricted to $\mathcal H_0^{\perp}$. Such operator $H_{\alpha,0}^{\perp}$ is positive for $\alpha \geq 0$ and bounded from below by $-\f{\alpha^2}{\pi^4 (1-\Gamma)^2}$ for $\alpha <0$.

\vs
\section{System of fermions}

In a system of identical fermions the wave function, due to the antisymmetry  under exchange of coordinates, vanishes at the coincident points of any pair of particles and therefore the zero-range interaction is ineffective.  On the other hand, in physical applications it is relevant the  case of a mixture of $N $ identical fermions of one species and $M$ identical fermions of another species. Here the dynamics is non trivial since  each fermion of one species feels the zero-range interaction with all the fermions of the other species. In particular, numerical simulations seem to suggest (\cite{mp}) that the system is stable at least for mass ratio equal to one but, in this generality, no rigorous result is available (see \cite{FT} for a formulation of the problem in terms of quadratic forms). A  significant aspect of the fermionic problem is that the stability of the system depends on the value of the mass ratio. This has been explicitly shown in the case of $N\geq 2$ identical fermions of mass one plus a different particle of mass $m$. More precisely one defines   
\be
\Lambda (m,N)  =   2\pi^{-1} (N-1)    (m+1)^2 \bigg[ \f{1}{ \sqrt{m(m+2)}} - \arcsin\bigg(\f{1}{m+1}\bigg) \bigg] 
\ee

\n
For each $N$, the function $\Lambda(\cdot,N)$ is positive, decreasing and satisfies $\lim_{m\rightarrow 0} \Lambda(m,N)  
= \infty$, \\
$\lim_{m \rightarrow \infty} \Lambda(m,N)=0$. Therefore, for each $N$  the equation $\Lambda(m,N)=1$ admits exactly one solution $m^*(N)>0$, increasing with $N$ and such that $m>m^*(N)$ if and only if $\Lambda(m,N)<1$. 
Furthermore, following the strategy outlined in section 2, we consider the STM operator for this fermionic case and construct the associated quadratic form, still denoted by $\mathcal F_{\alpha}$. In \cite{CDFMT} it is proved the following result.

\begin{theorem}\label{th2}
(Stability) If $m>m^*(N)$ then   $\mathcal F_{\alpha}$ is closed and bounded from below. In particular $\mathcal F_{\alpha}$ is  positive for $\alpha \geq 0$ and  bounded from below by $-\f{\alpha^2}{4 \pi^4 (1-\Lambda(m,N))^2}$ for $\alpha <0$. 
 Therefore the corresponding STM operator $H_{\alpha}$ is self-adjoint and bounded from below, with the same lower bound. 

\n
(Instability) If  $m<m^*(2)$ then  $\mathcal F_{\alpha}$ is unbounded from below for any $\alpha \in \bR$.
\end{theorem}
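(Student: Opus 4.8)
The plan is to mirror the bosonic analysis of Theorem \ref{mainth}, with the role of the excluded s-wave channel now played by the antisymmetry constraint. Following the strategy of Section 2, I would first construct the fermionic form $\form$: for $N$ identical fermions of mass one and one extra particle of mass $m$, only the fermion--impurity pairs carry a charge, so after imposing antisymmetry of $\psi$ under exchange of the fermions the collection $\{Q_{ij}\}$ is generated by a single charge that is antisymmetric in the fermionic labels. As in \eqref{formbos} this produces
\[
\form(\psi)=(w,h_f w)+c\,\big[\Phi^\textup{diag}(\hat\xi)+\Phi^\textup{off}(\hat\xi)+\alpha\!\int\! d\kv\,|\hat\xi(\kv)|^2\big],
\]
where now the kernel of $\Phi^\textup{off}$ carries the mass ratio through the coefficient of $\kv_1\cdot\kv_2$ and, by antisymmetry, couples each pair to the remaining $N-1$ pairs, whence the prefactor $N-1$.

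The heart of the matter is the sharp comparison estimate $-\Lambda(m,N)\,\Phi^\textup{diag}(\hat\xi)\leq\Phi^\textup{off}(\hat\xi)$ on the antisymmetric charge space, with $\Lambda(m,N)$ the constant displayed before the theorem. I would derive it exactly as in Lemma \ref{lemmaSl1} and Proposition \ref{stima}: expand $\hat\xi$ in spherical harmonics, diagonalize both $\Phi^\textup{diag}$ and $\Phi^\textup{off}$ by the Mellin-type substitution $g^\sharp(k)=\f{1}{\sqrt{2\pi}}\int dx\,e^{-ikx}e^{2x}g(e^x)$, and reduce the comparison to a single multiplier $S(k)$ playing the role of $S_l(k)$. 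The antisymmetry selects the angular channel in which the off-diagonal kernel is most negative, and the supremum over that channel and over $k$ of the ratio of $S(k)$ to the diagonal weight evaluates in closed form to the bracket $\f{1}{\sqrt{m(m+2)}}-\arcsin\f{1}{m+1}$, reproducing exactly $\Lambda(m,N)$. This is the main obstacle: performing the angular integral sharply and isolating the extremal channel, so that the constant is \emph{optimal} and the stability/instability transition lands precisely at $m^*(N)$, where $\Lambda(m,N)=1$.

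Granting the estimate, the stability half for $m>m^*(N)$ (equivalently $\Lambda(m,N)<1$) follows as in Theorem \ref{mainth}. For $\alpha\geq 0$ one discards $(w,h_f w)\geq 0$ and uses $1-\Lambda(m,N)>0$ to conclude $\form(\psi)\geq 0$. For $\alpha<0$ I would pass to the $\lambda$-regularized decomposition $\psi=w^\lambda+\mathcal{G}^\lambda\xi$, bound $\Phi_\lambda^\textup{diag}(\hat\xi)$ below by a multiple of $\sqrt{\lambda}\,\|\hat\xi\|^2$, and choose $\lambda$ of order $\alpha^2$ to absorb the negative mass term, reaching $\form(\psi)\geq-\f{\alpha^2}{4\pi^4(1-\Lambda(m,N))^2}\|\psi\|^2$; closedness is the same completeness argument used for Theorem \ref{mainth} and in \cite{CDFMT}, and the first representation theorem then yields a self-adjoint, bounded-below $H_\alpha$ with the same bound. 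For the instability half, with $N=2$ and $m<m^*(2)$ so that $\Lambda(m,2)>1$, sharpness means the diagonal no longer dominates: I would exhibit a charge $f_0$ in the extremal channel with $\Phi^\textup{diag}(f_0)+\Phi^\textup{off}(f_0)<0$ and then use the scaling $f_0(\cdot)\mapsto f_0(\cdot/\sigma)$ to send $\form$ to $-\infty$, exactly as indicated after \eqref{formoff} for the bosonic $l=0$ mode.
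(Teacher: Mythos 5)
Note first that this paper does not actually prove Theorem \ref{th2}: it states it and attributes the proof to \cite{CDFMT}, and what Section 3 does is transplant that proof to the bosonic case (Theorem \ref{mainth}, Lemma \ref{lemmaSl1}, Proposition \ref{stima}). Measured against the proof in \cite{CDFMT}, your outline is the right one: reduction by antisymmetry to a single fermion--impurity charge, expansion in spherical harmonics, Mellin-type diagonalization, a one-sided comparison between $\Phi^\textup{off}$ and $\Phi^\textup{diag}$ with constant $\Lambda(m,N)$, the $\lambda$-regularized decomposition to get the quantitative lower bound for $\alpha<0$, and a scaling argument for instability. You also correctly identify that the $+$ sign in \eqref{nond} makes the odd channels, and in particular $l=1$, the dangerous ones, and that the bracket $\frac{1}{\sqrt{m(m+2)}}-\arcsin\frac{1}{m+1}$ is the $k=0$ value of the $l=1$ multiplier.

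There are, however, two genuine gaps. The more serious one is your claim that the constant $\Lambda(m,N)$ is \emph{optimal}, so that ``the stability/instability transition lands precisely at $m^*(N)$.'' This is unjustified, and for $N>2$ it asserts more than is known. The factor $N-1$ in $\Lambda(m,N)$ arises from estimating each of the $N-1$ off-diagonal cross terms separately by the corresponding $2+1$ quantity, i.e., by discarding every cancellation that antisymmetry in the fermionic variables could produce among those terms; this one-sided bound is all that stability needs, but its sharpness is established only for $N=2$. This is exactly why the theorem you were asked to prove is asymmetric --- stability for $m>m^*(N)$, instability only for $m<m^*(2)$ --- and why the paper stresses that for $N>2$ nothing is known in the window $(m^*(2),m^*(N))$. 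Had your sharpness claim been correct, instability would follow for all $m<m^*(N)$, contradicting the open status of that window; the mismatch with the stated instability threshold should have been a red flag.

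The second gap concerns the instability half itself: the theorem asserts that the $(N+1)$-particle form $\mathcal F_\alpha$ is unbounded from below whenever $m<m^*(2)$, for every $N\geq 2$, whereas your scaling argument is carried out only for $N=2$. For general $N$ one must embed the unstable $2+1$ trial state into the antisymmetric $(N+1)$-particle setting --- e.g., trial charges in which one fermion--impurity pair carries the scaled $l=1$ charge while the remaining fermions are kept effectively decoupled --- and verify that the additional cross terms generated by antisymmetrization do not spoil the divergence to $-\infty$. That step is absent. Neither gap touches the stability half for $m>m^*(N)$, which requires only the one-sided comparison estimate and is argued correctly, including the $\lambda\sim\alpha^2$ absorption giving the bound $-\frac{\alpha^2}{4\pi^4(1-\Lambda(m,N))^2}$.
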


\n
The above theorem provides an optimal result in the case $N=2$, i.e., stability for $m>m^*(2)$ and instability for $m<m^*(2)$, where $m^*(2) \simeq 0.0735$, in agreement with previous heuristic results in the physical literature (\cite{bh}) and also with other mathematical results (\cite{m5}, 
\cite{m4}). On the other hand, in the case $N>2$ we get only a partial result since no information is given for $m\in (m^*(2), m^*(N))$ and, in order to fill this gap, a more careful analysis of the role of the antisymmetry is required (for other results in this direction we refer to \cite{cmp}, \cite{m3}). 

\n
The special case $N=2$ in the unitary limit, i.e., for $\alpha=0$, exhibits a further interesting behavior that we want to discuss in the rest of this section.  In the center of mass reference frame we choose relative coordinates $\yv_1=\xv_1- \xv_0$, $\yv_2=\xv_2-\xv_0$, where $\xv_1, \xv_2$ are the coordinates of the fermions and $\xv_0$ denotes the coordinate of the other  particle. Let  $L_a^2(\R^6)$ be the Hilbert space of states, i.e., the space of square integrable functions anisymmetric under the exchange of coordinates. Moreover we have $\xi_{12}=0$ and the antisymmetry condition implies $\xi_{20}=-\xi_{10}:= -\xi$. Then the potential in the Fourier space takes the form
\be
(\widehat{\mathcal G \xi}) (\kv_1, \kv_2)= \f{2}{\sqrt{2\pi}} \,\f{\hat{\xi}(\kv_1) - \hat{\xi}(\kv_2)}{k_1^2 + k_2^2 + \f{2}{m+1} \kv_1 \cdot \kv_2}
\ee
and the quadratic form associated to the STM operator is
\be\label{domu}
D(\mathcal F_0)=\Big\{  \psi \in L^2_a(\R^6)\,|\, \psi= w + \mathcal G \xi ,\; |\nabla w|\in L^2_a(\R^6), \; \xi \in H^{1/2}(\R^3) \Big\}
\ee
\be\label{foru}
\mathcal F_0(\psi)= (w, h_f w) + \f{2(m+1)}{\pi m} \, \Phi_0 (\xi)
\ee
where
\begin{align}
\Phi_0(\xi) &= \Phi^{\textup{diag}}(\hat{\xi})+\Phi^\textup{off}(\hat{\xi})\\
	\Phi^\textup{diag}(f)&=  \f{ 2 \, \pi^2 \sqrt{m(m+2)} }{m+1} \int \!\! d\kv\, k \, |f(\kv)|^2\label{formdiag}\\
	\Phi^\textup{off}(f)&= \int \!\! d\kv_1\, d\kv_2\,  \frac{\overline{f(\kv_1)}f(\kv_2)}{k_1^2+k_2^2+\f{2}{m+1} \kv_1
															\cdot\kv_2 }\label{nond}
\end{align}
From theorem \ref{th2} we know that the form is closed and bounded from below for $m>m^*$ and unbounded from below for $m<m^*$ (here we have used the shorthand notation $m^*=m^*(2)$). We also notice the main differences with respect to the form in the bosonic case, i.e., the dependence on $m$ and, more important, the sign $+$ in front of the integral in  \eqref{nond}. This implies that for the estimate of $\Phi^\textup{off}(f)$ one has to study the terms for $l$ odd, and in particular the case $l=1$, in the expansion in spherical harmonics of $f$.  

\n
As a matter of fact, for suitable values of the mass $m$ the above quadratic form can be modified by enlarging the class of admissible charges and the new quadratic form turns out to be closed and bounded from below. Therefore it defines a Hamiltonian, different from $H_{\alpha}$, describing an additional three-body interaction besides  the standard two-body zero-range interaction (see \cite{CDFMT2} for details).

\n
In order to explain the above assertion, we proceed formally. Let us define
\be\label{xi-}
\hat{\xi}^-_n (\kv) = \f{1}{k^{2-s}} \, Y_{1}^n(\theta, \phi)\,, \;\;\;\;\;\;\;\; 0<s<1, \;\;\;\;\;\; n=0,\pm1
\ee
Notice that $\xi^-_n \notin L^2(\R^3)$ but this fact is not relevant since for $\alpha=0$ the condition $\Phi_0(\xi) < \infty$  does not require square-integrability of $\xi$. The crucial point is that both $\Phi^\textup{diag}(\hat{\xi}^-_n)$ and   $\Phi^\textup{off}(\hat{\xi}^-_n)$ diverge, due to the behavior of $\hat{\xi}^-_n(\kv)$ for large $k$ and the two infinities can compensate for an appropriate value of the mass. Indeed, by a direct computation one finds 
\begin{align}
\Phi_0(\xi^-_n) &= 2\pi\Bigg[ \f{\pi \sqrt{m(m+2)}}{m+1} + \int_{-1}^{1}\!\!\!\!dt\, t\!\!\int_0^{\infty}\!\!\! dq\, \f{q^s}{q^2 +1 + \f{2}{m+1}tq} \Bigg] \int_0^{\infty}\!\!\!dk\, \f{1}{k^{1-2s}} \nonumber\\
&:= g(m,s) \int_0^{\infty}\!\!\!dk\, \f{1}{k^{1-2s}} = \infty \;\;\;\;\;\;\text{unless}\;\;\;\;\;\; g(m,s)=0
\end{align}
The problem is then reduced to the study of the equation $g(m,s)=0$. One can show that for $s\in[0,1]$ there is a unique solution $m(s)$, monotonically increasing, with $m(0)=m^*$ and $m(1):= m^{**} \simeq 0.116$. For $m\in (m^*, m^{**})$ we can therefore define the inverse function $s(m)$, with $0<s(m)<1$,  which satisfies $g(m, s(m))=0$. This means that for each $m\in (m^*, m^{**})$ the charge \eqref{xi-} with $s=s(m)$ can be considered to enlarge the class of  admissible charges and to construct a more general quadratic form. Starting from the above argument, one can prove the following result.

\begin{theorem}\label{th3}
For any $m\in (m^*,m^{**})$ and $\beta := \{\beta_n\}$, $n=0,\pm1$, the quadratic form in $L^2_a(\R^6)$ 
\begin{align}
D(\mathcal F_{0,\beta}) &= \,\Big\{ \psi \in L_a^2(\R^6)\,|\, \psi = w + \mathcal G \eta, \; |\nabla w|\in L^2_a(\R^6), \; \eta \in H^{-1/2}(\R^3), \; \nonumber\\
&\;\;\;\;\;\;\;\;\eta= \xi+\sum_{n=-1}^1 q_n \xi^-_n, \; \Phi^\textup{diag}(\hat{\xi})<\infty, \; q_n \in \bC \Big\}\\
\mathcal F_{0,\beta}(\psi)&= (w, h_f w)+ \f{2 (m+1)}{\pi m} \Phi_0(\xi) + \sum_{n=-1}^1 \beta_n |q_n|^2
\end{align}
is closed and bounded from below. Then it uniquely defines a self-adjoint and bounded from below Hamiltonian $H_{0,\beta}$, $D(H_{0,\beta})$.
\end{theorem}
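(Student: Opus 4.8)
The plan is to localize the whole difficulty in the angular momentum channel $l=1$ and there to diagonalize the form by a Mellin transform, reading the three functions $\xi^-_n$, $n=0,\pm1$, as resonant modes sitting exactly at the complex zeros of the resulting symbol. First I would expand the charges in spherical harmonics, $\hat\xi(\kv)=\sum_{l,n}\xi_{ln}(k)\,Y^n_l(\theta,\phi)$, so that, as in \cite{CDFMT}, the form $\Phi_0$ splits into channels labelled by $l$. Since the singular modes \eqref{xi-} belong to the $l=1$ sector and, for $m>m^*$, the estimates behind Theorem \ref{th2} (the diagonal part dominating the off-diagonal one) already give $\Phi_0\ge0$ in every channel $l\neq1$ and on the \emph{regular} part of the $l=1$ channel, the only genuinely new phenomenon is the interaction between the regular $l=1$ charge and the three modes $\xi^-_n$.

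Next I would pass to the Mellin variable via the transform $g\mapsto g^\sharp$ of \eqref{Fdiag}--\eqref{Foff}, which turns the $l=1$ restriction of $\tfrac{2(m+1)}{\pi m}\Phi_0$ into multiplication by a symbol $\Theta_m(k)$ on $L^2(\R,\rd k)$. I would then establish three facts: (i) $\Theta_m(k)>0$ for all real $k$ when $m>m^*$, which re-proves positivity of the regular form and gives the coercive bound $\Phi_0(\xi)\ge\delta\,\Phi^\textup{diag}(\hat\xi)$ with $\delta>0$; (ii) $\Theta_m$ continues to an even analytic function whose only zeros in the relevant strip are at $k=\pm i\,s(m)$, which is precisely the equation $g(m,s(m))=0$ defining $s(m)$ for $m\in(m^*,m^{**})$; and (iii) the Mellin image of $\xi^-_n$, generated by the radial profile $k^{-(2-s)}$, concentrates exactly at $k=\pm i\,s(m)$. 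Facts (ii)--(iii) are the core of the renormalization: they force the sesquilinear pairing $\Phi_0(\xi,\xi^-_n)$ between a regular charge and a resonant mode to vanish and render the self-interaction of $\xi^-_n$ finite after a cutoff, so that the formally indeterminate quantity $\Phi_0(\xi^-_n)$ is well defined and all cross terms drop out. This is exactly why $\Phi_0(\eta)=\Phi_0(\xi)$ on the enlarged domain and why the form \eqref{foru} carries the \emph{independent} term $\sum_n\beta_n|q_n|^2$, the parameters $\beta_n$ encoding the strength of an additional three-body contact interaction.

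For the lower bound I would follow the $\la$-regularized decomposition already used for $\al<0$ in the proofs of Theorems \ref{mainth} and \ref{th2}: writing $\psi=w^\la+\mathcal G^\la\eta$ one obtains $\mathcal F_{0,\beta}(\psi)=(w^\la,h_fw^\la)+\la\|w^\la\|^2-\la\|\psi\|^2+\tfrac{2(m+1)}{\pi m}\Phi_0^\la(\xi)+\sum_n(\beta_n+c_n(\la))|q_n|^2$, where the regularization shifts the renormalized self-energy of each resonant mode by a constant $c_n(\la)$ that diverges to $+\infty$ as $\la\to\infty$. Choosing $\la$ so large that $\beta_n+c_n(\la)\ge0$ for every $n$, and using $\Phi_0^\la(\xi)\ge0$ together with $(w^\la,h_fw^\la)+\la\|w^\la\|^2\ge0$, I conclude $\mathcal F_{0,\beta}(\psi)\ge-\la\|\psi\|^2$, which is boundedness from below.

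Closedness would then be proved as in Theorem \ref{mainth}: given $\psi_j\to\psi$ in $L^2_a(\R^6)$ with $\mathcal F_{0,\beta}(\psi_j-\psi_k)\to0$, the Cauchy condition controls the kinetic seminorm of $w_j$ and, through the coercive bound (i), the $\Phi^\textup{diag}$-norm of $\xi_j$; one identifies the limits $w$ and $\xi$ and checks $\psi=w+\mathcal G\eta\in D(\mathcal F_{0,\beta})$. The genuinely new point is the convergence of the coefficients $q^{(j)}_n$, which cannot be read off from the form alone because the $\beta_n$ may be negative, and must instead be extracted by combining the $L^2$ convergence of $\psi_j$ with the lower bound just obtained. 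I expect the main obstacle to be exactly the renormalization bookkeeping of the second paragraph --- rigorously taming the indeterminate $\Phi_0(\xi^-_n)$ by a cutoff and verifying that the cross terms vanish and that $c_n(\la)\to+\infty$ --- while the remaining estimates are transcriptions of the $m>m^*$ analysis of \cite{CDFMT,CDFMT2}.
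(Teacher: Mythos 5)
Your proposal cannot be compared with an in-paper proof, because the paper does not prove Theorem \ref{th3}: section 4 only carries out the heuristic computation $\Phi_0(\xi^-_n)=g(m,s)\int_0^\infty dk\, k^{2s-1}$, defines $s(m)$ by $g(m,s(m))=0$, and defers the proof to \cite{CDFMT2}. Measured against that reference, your skeleton is the right one: reduction to the $l=1$ channel (all other channels being handled by the $m>m^*$ estimates of Theorem \ref{th2}), Mellin diagonalization of the radial form, identification of $s(m)$ with the zero of the analytically continued symbol, a $\la$-regularized decomposition for the lower bound, and a Cauchy-sequence argument for closedness; your facts (i)--(iii) are indeed a correct restatement of the condition $g(m,s(m))=0$. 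The problem is that the part you postpone as ``renormalization bookkeeping'' is where the theorem actually lives, and your specific claims about it would fail as stated.

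Concretely: (a) the claimed vanishing of the cross pairings $\Phi_0(\xi,\xi^-_n)$ is purely formal. An admissible regular charge satisfies only $\Phi^{\textup{diag}}(\hat\xi)<\infty$, so $\xi^\sharp$ is a generic weighted-$L^2$ function on the real line with \emph{no} analytic continuation into the strip $|\im\, k|\le s(m)$; pairing it against a ``delta at $k=\pm i s(m)$'' is meaningless, and the double integral defining the off-diagonal cross term is not even absolutely convergent. The correct logic is the reverse: the formula in the statement is a \emph{definition} containing no cross terms, and the genuinely needed (and missing) step is well-posedness, i.e.\ that the decomposition $\psi=w+\mathcal G\xi+\sum_n q_n\mathcal G\xi^-_n$ determines $(w,\xi,q_n)$ uniquely, since the value of $\mathcal F_{0,\beta}$ a priori depends on the decomposition and not only on $\psi$. (b) Your $\la$-regularized identity, with the same $\xi,q_n$, shifted coefficients $\beta_n+c_n(\la)$ and no mixed terms, cannot hold as written: the cancellation mechanism is the vanishing of the \emph{scale-invariant} symbol at $\pm i s(m)$, and scale invariance is exactly what $\la>0$ destroys, so at level $\la$ the cross terms between $\xi$ and the resonant modes no longer drop out. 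One must either retain and estimate them, or adapt the singular charges to the scale $\sqrt\la$ (infrared-regularized tails) and then prove that the resulting domain and form are independent of $\la$; note in this connection that with the pure power law \eqref{xi-} the potential $\mathcal G\xi^-_n$ is not square integrable near $\kv=0$ when $s(m)\le 1/2$, which covers part of the mass window $(m^*,m^{**})$, so some infrared modification is unavoidable even to make sense of the domain. (c) The properties of $c_n(\la)$ on which your whole lower bound rests --- finiteness of the renormalized self-energy, its positivity, and its divergence like $\la^{s(m)}$ --- are asserted but never derived; they are a real computation, and precisely the place where the window $(m^*,m^{**})$ and the sign structure of the fermionic $l=1$ kernel \eqref{nond} enter. (d) For closedness, the lower bound alone cannot produce convergence of the coefficients $q^{(j)}_n$ of a Cauchy sequence: what is needed is a quantitative relative bound of the type $|q_n|^2\le \eeps\,(\text{positive part of the form})+C_\eeps\|\psi\|^2$, which must be proven separately and is also the natural (KLMN-type) route to boundedness from below.
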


\n
We conclude with some comments. 

\n
i) At heuristic level, the Hamiltonian $H_{0,\beta}$ has been introduced and studied in the physical literature (see, e.g., \cite{wc1}). From the mathematical point of view, an analogous result has been found in \cite{m4} using an appoach based on the theory of self-adjoint extensions. Nevertheless, in \cite{m4} the analysis is done for $\alpha \neq 0$, which requires charges in $L^2$. Therefore the parameter $s$ in \eqref{xi-} is chosen in the interval $(0,1/2)$ and for this reason the new Hamiltonian is constructed only for a smaller range of mass, i.e., for $m\in (m^*, m^{**}_{Minlos})$, with $m^{**}_{Minlos} = m(s)|_{s=1/2} < m^{**}$.

\n
ii) The quadratic form $\mathcal F_{0, \beta}$ constructed in theorem \ref{th3} generalizes the previous one $\mathcal F_0$ in the sense that $\lim_{\beta \rightarrow \infty} \mathcal F_{0, \beta}= \mathcal F_0$. 

\n
iii) A final, and more important, comment concerns the boundary condition satisfied by an element of $D(H_{0,\beta})$. Denoting $R=\sqrt{y_1^2 + y_2^2}$ and choosing for simplicity $\xi^-_{\pm1}=0$, for $R\rightarrow 0$ one finds
\be
\psi(\yv_1, \yv_2)= \f{q_0}{R^{2+s(m)}} +\f{\nu(m) \beta_0 q_0}{R^{2-s(m)}} + o(R^{s(m)-2})
\ee
where $\nu(m)$ is a given positive function of $m$. In analogy with the case of a point interaction (see \eqref{bc0}), such boundary condition describes an interaction  supported in $\yv_1=\yv_2=0$, i.e., when the positions of all the three particles coincide. Therefore the new Hamiltonian $H_{0,\beta}$ describes the two-body (resonant) zero-range interactions plus an effective three-body point interactions. 

\vs
\section{References}


\renewcommand{\refname}{}    
\vspace*{-36pt}              

\vs

\frenchspacing

\end{document}